\title{PDE Approaches to Graph Analysis}
\author{Survey by Justin Solomon\footnote{In satisfaction of qualifying exam requirements for computer science theory, Stanford University, fall 2012.}}
\date{October 2012 (posted April 2015)}
\newcommand{\G}[0]{\mathcal{G}}
\newcommand{\V}[0]{\mathcal{V}}
\newcommand{\E}[0]{\mathcal{E}}
\newcommand{\calc}[0]{\textrm{calc}}
\newcommand{\degree}[0]{\textrm{deg }}
\newcommand{\LL}[0]{\mathscr{L}}
\newcommand{\GG}[0]{\mathscr{G}}
\newcommand{\HH}[0]{\mathscr{H}}
\newcommand{\R}[0]{\mathbb{R}}
\newcommand{\Z}[0]{\mathbb{Z}}
\renewcommand{\L}[0]{\mathcal{L}}
\newcommand{\D}[0]{\mathcal{D}}
\newtheorem{lemma}{Lemma}
\begin{document}
\maketitle

The analysis, understanding, and comparison of network structures is a prominent topic not only in computer science theory but also in a diverse set of application-oriented fields.  For instance, social media sites use tools from this domain to understand large-scale structures arising from localized connections between users, while computer graphics software applies the same methodology to understanding the shapes of meshed surfaces.  Further applications of this toolkit range from designing random walk models using Markov chains~\cite{lovasz93} to characterizing patterns in molecules, neural networks, and food webs~\cite{chung07}.

The obvious structure for representing a network is a \emph{graph}, given by a collection of nodes connected by edges (optionally associated with distances or weights).  The analysis of graphs as discrete objects is a classical topic predating the development of computer technology.  Algorithms for finding shortest paths, routing flow, finding spanning subtrees, identifying cycles, and so on are well-understood, although many simple problems about graphs remain unsolved.

Some more modern branches of graph theory, however, treat graphs themselves as mutable objects that should be understood in a continuous or probabilistic context.  Graphs from many sources are subject to discrete and continuous changes:  social networks grow and evolve as users ``friend'' and ``un-friend'' each other, links can lengthen or strengthen, and so on.  Due to this mutability as well as the possibility of error in the collection of graph data, multiscale methods characterizing graphs at different levels of detail are helpful for understanding broad or changing patterns in connectivity.  Parameters such as scale and sensitivity, however, can be understood as \emph{continuously}-varying, and thus research carrying out such ``semi-discrete'' analysis with rigor must combine insights from discrete graph theory and real analysis.

Of special interest are methods making use of partial differential equations (PDEs) and spectral analysis.  PDE methodologies model how physical interactions on a local scale can characterize global behavior of waves, heat, and other phenomena on assorted media and domains.  A fundamental and predictable theme in PDE analysis is that the operators used to construct PDEs encode information about the structure of the domains with which they are associated.  For example, the differential building blocks of the heat equation determine how a given surface conducts heat.  Less obviously, the De Rham cohomology from differential topology demonstrates that the topology of a smooth surface can be understood by analyzing applications of its ``first derivative'' operator $d$~\cite{madsen97}.

While first-order PDEs largely can be solved exactly, understanding the second-order Laplacian operator leads to techniques for analyzing functions on a surface and the geometry of the surface itself~\cite{rosenberg97}.  Applications in geometry processing discretize these equations using finite elements, discrete exterior calculus (DEC), and other approximations of the Laplacian for per-vertex signals on meshed surfaces~\cite{hirani03,reuter09}.  Many such discretizations exhibit convergence in the limit of mesh refinement, providing a connection to the theory of smooth surfaces, and in the case of DEC discrete theorems provide exact descriptions of behavior without any limiting process.  Regardless, with a discrete surface Laplacian in place, tools for matching, segmentation, and other tasks can be designed using solutions to the heat and wave equations, among other model PDEs~\cite{bronstein11}.

PDE-based analysis of discrete surfaces has the advantage that it can draw intuition from parallel constructions in differential geometry.  Differential graph analysis, on the other hand, requires the development of a new framework to understand diffusion, oscillation, and other phenomena as they might occur on graphs, which may not be physically realizable.  Instead, we \emph{define} a Laplacian for functions on a graph with particular manipulations, constructions, and applications in mind.  Of particular success has been the field of ``spectral graph theory,'' which involves studying eigenvalues of matrix-based graph Laplacians and connecting their properties to those of the graph itself~\cite{chung97}.

Spectral graph theory still is a fairly discrete area of graph analysis in that Laplacians in this domain are sparse matrices and their resulting spectra are finite.  In contrast, solving PDEs like the heat and wave equations on a graph involve \emph{continuous} derivatives in a time variable.  In this nascent area of research, there is less consensus on the choice of appropriate ``differential'' operators and treatments of the time variable to yield meaningful solutions.  Such solutions, however, can yield insight into how signals might propagate from node to node of a graph through its edges.

Here, we will discuss and contrast approaches proposed for graph adaptations of three model PDEs:  the Poisson, wave, and heat equations (introduced in~\cite{chung00,friedman04,chung07}, resp.).  In particular, we will show how the choice of a graph Laplacian operator complements the solutions and properties of each model equation.  Such a direct comparison motivates discussion of modeling more complex PDEs on graphs and potential directions for future research.

\section{Basic PDE Analysis}\label{basicanalysis}

Here we introduce the model PDEs we eventually will consider on graphs.  We focus on the definition and structure of each of these equations rather than its derivation from physical principles; we refer the reader to any basic PDE textbook, such as~\cite{strauss08}, for a physically-motivated discussion.

In general, we will consider a \emph{partial differential equation} in a scalar function $u(x)$ to be any equation of the form $F(\{D^nu: n\in\Z^+\},x)=G(x)$, where $D^nu$ denotes the set of $n$th derivatives of $u$.  To simplify matters, we will consider only the case where $F$ is linear in $u$ and its derivatives, thus denoting a \emph{linear} PDE.  We take $u$ to be a sufficiently differentiable function $u:\Omega\rightarrow\R$ for some domain $\Omega$; optionally we can include an additional ``time'' variable $t\in\R^+$ in which case $u:\Omega\times\R^+\rightarrow\R$.  We remain purposefully vague in this section with respect to the choice of $\Omega$; of course one can think of $\Omega\subseteq\R^n$ for the usual discussion of model PDEs, but we eventually will take $\Omega$ to be a graph.

\subsection{The Laplacian}\label{laplacianproperties}

The ``method of characteristics'' readily provides solutions to most model first-order PDEs in $u$, but including a second-order term is sufficient to yield diverse and nontrivial behavior.  To this end, we introduce a linear \emph{Laplacian} operator $\Delta$, which evaluates the total second derivative of $u$; for instance, the Laplacian of $u:\Omega\subseteq\R^n\rightarrow\R$ takes the following form:
\begin{equation}
\Delta u = -\sum_{i=1}^n \frac{\partial^2 u}{\partial x_i^2}
\end{equation}
Symbolically we can factor the Laplacian operator from the right hand side, and thus we consider $\Delta$ as a functional $\Delta: C^\infty(\Omega)\rightarrow C^\infty(\Omega)$.  In the case of $\Omega\subseteq \R^n$ above we thus write
\begin{equation}
\Delta = -\sum_{i=1}^n \frac{\partial^2}{\partial x_i^2}.
\end{equation}

In functional analysis, the Laplacian is the canonical example of a compact self-adjoint linear operator.  Specifically, the operator satisfies the following properties:
\begin{description}
\item[Linear] For $u,v\in C^\infty(\Omega)$ and $c_1, c_2\in\R$, $\Delta(c_1u+c_2v)=c_1\Delta u+c_2\Delta v$.
\item[Compact] $\Delta$ is bounded, in that for appropriate functional norm $\|\cdot\|$ there exists some $M\geq0$ satisfying $\|\Delta u\|\leq M\|u\|$ for all $u\in C^\infty(\Omega)$.  Furthermore, $\Delta$ applied to a sufficiently small open neighborhood of the function $u\equiv0$ using the topology induced by the norm $\|\cdot\|$  yields a subset of a compact set in the same functional space.
\item[Self-adjoint] If we take $u,v\in C^\infty(\Omega)$, then $\langle \Delta u, v\rangle=\langle u,\Delta v\rangle$ for the functional inner product $\langle \cdot,\cdot\rangle$.
\end{description}
Many of these properties are difficult to verify for functional operators on $\Omega\subseteq\R^n$ but become trivial for operators such as the graph Laplacian, which operate on finite-dimensional spaces.  In this case, self-adjoint operators are given by symmetric matrices $M\in\R^{n\times n}$ (thus satisfying $M=M^\top$) since we can write $(M\vec{v})\cdot\vec{w}=(M\vec{v})^\top\vec{w}=\vec{v}^\top(M^\top \vec{w})=\vec{v}\cdot (M^\top \vec{w})$.

To distinguish the Laplacian from other elliptic operators, we will give it one more property that will make it straightforward to generalize proofs of certain analytical theorems:
\begin{description}
\item[Categorizes extrema] If $x\in\Omega$ is a local minimum of $u:\Omega\rightarrow\R$, then $[\Delta u](x)\leq0$.
\end{description}
Here, we will assume that the idea of a ``local minimum'' is adapted to the domain $\Omega$ in question.  For instance, if $\Omega\subseteq\R^n$ and $u\in C^\infty(\Omega)$, then $x$ will be a local minimum when $u(x)\leq u(y)$ for all $y$ satisfying $\|x-y\|<\varepsilon$ for some $\varepsilon>0$.  On a graph, a local minimum of a function on vertices will be a vertex whose associated value is less than or equal to the values at its neighbors.

One principal reason that self-adjoint linear operators are so appealing is that they admit a full set of orthogonal eigenvectors spanning $L^2(\Omega)$.\footnote{This discussion is again cavalier with respect to the exact properties of $\Omega$ for such spectral theory to hold.  Since we principally consider graphs and compact subsets of $\R^n$, we will not delve into the particulars of this theory.}  In particular, there exist orthonormal eigenfunctions $\phi_1, \phi_2, \ldots$ with corresponding eigenvalues $0\leq\lambda_1\leq\lambda_2\leq\ldots$ such that $f\in L^2(\Omega)$ can be written $f=\sum_ia_i\phi_i$.

For such a decomposition to exist, in many cases we must introduce boundary conditions.  Thus, if desired or necessary we decompose $\Omega$ into a disjoint boundary and interior as $\Omega=\partial\Omega\cup\mathring{\Omega}$.  In this case, we will consider only eigenfunctions $\phi_i$ satisfying $\Delta \phi_i = \lambda_i\phi_i$ in $\mathring{\Omega}$ and $\phi_i|_{\partial\Omega}=0$.  This restriction corresponds to \emph{Dirichlet} boundary conditions; we will not consider derivative-based \emph{Neumann} boundary conditions or others here, although most of the frameworks we discuss will behave equally well in this case.

\subsection{The Elliptic Case:  Laplace and Poisson Equations}

The simplest possible PDE involving the Laplacian is the Laplace equation, given by:
\begin{equation}
\Delta u = 0
\end{equation}
This PDE is slightly generalized by the Poisson equation, which prescribes nonzero values for $\Delta u$:
\begin{equation}\label{poissoneqn}
\Delta u = f
\end{equation}
The Poisson and Laplace equations are \emph{elliptic}, since the highest-order derivatives appear in the elliptic operator $\Delta$.

These equations appear in many situations, from Maxwell's equations for electrostatics and the governing equations of Brownian motion to byproducts of the Cauchy-Riemann equations defining complex holomorphic functions.  In discrete form, they are used in image smoothing because they are critical points of the functional $\int \|\nabla I\|^2\ dA$ for image $I$.

Note that if we write $f=\sum_ia_i\phi_i$ for Laplace eigenfunctions $\phi_i$ and similarly decompose $u=\sum_ib_i\phi_i$, then~\eqref{poissoneqn} becomes $\sum_i\lambda_ib_i\phi_i=\sum_ia_i\phi_i$, and by orthogonality of the $\phi_i$'s we have $b_i=\nicefrac{a_i}{\lambda_i}$; of course if $\lambda_i=0$ then $\phi_i$ satisfies the Laplace equation and can be added with a free coefficient.  In this way, solving the Poisson equation is trivial in the Laplace eigenfunction basis.

\subsection{The Parabolic Case:  The Heat Equation}

By introducing a time variable $t$, we can model PDEs that evolve from some starting condition. For instance, consider the time evolution of $u$ given by
\begin{equation}
u_t = -\Delta u
\end{equation}
where we denote $u_t=\frac{\partial u}{\partial t}$, where $u$ satisfies $u|_{t=0}\equiv f\in C^\infty(\Omega)$ and we constrain $u|_{\partial\Omega}=0\ \forall t\geq0$.

This PDE is known as the \emph{heat equation}, because it models the diffusion of $f$ over $\Omega$ for time $t$.  It is a model \emph{parabolic} equation, because the matrix of high-order derivatives is positive semi-definite (only first derivatives appear in $t$).  Similar to the Poisson case, if we write $f=\sum_ia_i\phi_i$ then it is easy to see that $u=\sum a_ie^{-\lambda_i t}\phi_i$ satisfies our conditions.

Despite the generality of our choice of $\Omega$, we actually can prove some facts about solutions of the heat equation:
\begin{lemma}[Weak Maximum/Minimum Principle]\label{weakmaximumprinciple}
Suppose $u\in C^\infty(\Omega)$ satisfies $u_t\geq -\Delta u$ with $u|_{\partial\Omega}\geq0\ \forall t\in[0,T)$ and $u|_{t=0}\geq 0$.  Then, $u\geq0\ \forall t\in[0,T]$.
\end{lemma}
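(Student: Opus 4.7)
The plan is to argue by contradiction using the classical perturbation trick, whose only non-trivial ingredient is the ``categorizes extrema'' property of $\Delta$ from Section~\ref{laplacianproperties}. Assume for contradiction that $u(x^*, t^*) < 0$ at some $x^* \in \bar\Omega$ and $t^* \in (0, T]$. I would introduce the auxiliary function $v := u + \varepsilon t$ with $\varepsilon > 0$ chosen small enough that $v(x^*, t^*) < 0$ still holds. Since $\Delta$ is linear and annihilates functions of $t$ alone, $v$ satisfies the strict analogue $v_t \geq -\Delta v + \varepsilon$ of the hypothesis, which is the real reason for adding the $\varepsilon t$ term.

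Next I would locate the minimum of $v$ on the compact parabolic cylinder $\bar\Omega \times [0,T]$, which exists by continuity of $v$. The assumptions $u|_{\partial\Omega} \geq 0$ on $[0,T)$ and $u|_{t=0} \geq 0$, combined with $\varepsilon t \geq 0$ and the smoothness of $u$ up to $t = T$, imply $v \geq 0$ on the parabolic boundary $(\partial\Omega \times [0,T]) \cup (\bar\Omega \times \{0\})$. Since $v(x^*, t^*) < 0$, the minimum therefore has to lie at an interior point $(x_0, t_0)$ with $x_0 \in \mathring{\Omega}$ and $t_0 > 0$. At such a point, $x_0$ is a spatial local minimum of $v(\cdot, t_0)$, so ``categorizes extrema'' gives $-\Delta v(x_0, t_0) \geq 0$; and $t_0 \in (0,T]$ is a temporal minimum of $v(x_0, \cdot)$, so the (possibly one-sided) derivative satisfies $v_t(x_0, t_0) \leq 0$. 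Combined with $v_t \geq -\Delta v + \varepsilon$, this yields $0 \geq v_t(x_0, t_0) \geq \varepsilon > 0$, a contradiction. Hence $v \geq 0$ everywhere, and letting $\varepsilon \downarrow 0$ delivers $u \geq 0$.

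The main obstacle — and the reason the $\varepsilon t$ perturbation is indispensable — is the case $t_0 = T$: without it, at the minimum one would only know $v_t \leq 0$ and $-\Delta v \geq 0$, which is perfectly consistent with the non-strict inequality $v_t \geq -\Delta v$ and produces no contradiction. The strict surplus of $\varepsilon$ is exactly what closes the gap. A secondary, mostly cosmetic concern is the attainment of the minimum, which requires $\bar\Omega$ to be compact; this is compatible with the setting of Section~\ref{laplacianproperties} but would need adjustment (e.g., a decay condition at infinity) for unbounded $\Omega$. Beyond these points, the proof uses essentially nothing about $\Omega$ itself, which is precisely why the same argument will transfer to the graph setting once ``local minimum'' and $\Delta$ are reinterpreted in that context.
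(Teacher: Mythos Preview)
Your proof is correct and follows essentially the same approach as the paper: introduce $v = u + \varepsilon t$, locate its minimum, and use the ``categorizes extrema'' property together with $v_t \leq 0$ at the minimum to contradict the strict inequality $v_t + \Delta v \geq \varepsilon$. The only cosmetic differences are that the paper works on an intermediate interval $[0,T_0]$ with $T_0 < T$ rather than invoking continuity up to $t=T$, and concludes via the chain $u(x,t) \geq v(x_0,t_0) - \varepsilon t \geq -\varepsilon T$ rather than framing the argument as a contradiction from an assumed negative value; neither difference is substantive.
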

\begin{proof}
Our proof follows Theorem 3.1 of~\cite{chung07} but is a standard proof in analysis.  Take $T_0\in(0,T)$ and $\varepsilon>0$, and define $v=u+\varepsilon t\in C^\infty(\Omega\times[0,T))$.  Suppose $v$ has a minimum at $(x_0,t_0)\in\Omega\times[0,T_0]$ with $x_0\in\mathring{\Omega}$ and $t_0\in(0,T_0]$.  Then, we must have
\begin{equation}\label{leqeqn}
v_t(x_0,t_0) = u_t(x_0,t_0)+\varepsilon \leq 0
\end{equation}
because $(x_0,t_0)$ is a stationary point of $v$; we only require $v_t\leq0$ rather than $v_t=0$ in case $t_0=T_0$.  By the ``categorizes extrema'' property, we also have
\begin{equation}\label{geqeqn}
\Delta v(x_0,t_0) = \Delta u(x_0,t_0) \leq0
\end{equation}
Combining~\eqref{leqeqn} and~\eqref{geqeqn} yields
\begin{equation}
v_t(x_0,t_0) +\Delta v(x_0,t_0)\leq 0.
\end{equation}
This contradicts the relationship
\begin{equation}
v_t(x_0,t_0) +\Delta v(x_0,t_0) = [u_t(x_0,t_0) +\Delta u(x_0,t_0)] + \varepsilon \geq \varepsilon > 0
\end{equation}
by the premises of the Lemma.  Thus, we must have that the only minima of $v$ occur when $t_0=0$ or $x_0\in\partial\Omega$.  Suppose the minimum occurs at $(x_0,t_0)$.  For generic $(x,t)$, we have 
\begin{align*}
u(x,t)&=v(x,t)-\varepsilon t\textrm{ by definition of }v\\
&\geq v(x_0,t_0)-\varepsilon t\textrm{ by definition of }(x_0,t_0)\\
&=u(x_0,t_0)+\varepsilon t_0 -\varepsilon t\textrm{ again by definition of }v\\
&\geq\varepsilon(t_0-t)\textrm{ because }u\geq 0\textrm{ on }\partial\Omega\textrm{ and when }t=0\\
&\geq-\varepsilon T\textrm{ since }t_0,t\in[0,T]
\end{align*}
Note that $T$ is fixed and $\varepsilon>0$ is arbitrary, so in reality this inequality shows $u(x,t)\geq0$, as needed.
\end{proof}

This principle leads us to the uniqueness of solutions to the heat equation.  In particular, if $u_1$ and $u_2$ both satisfy the heat equation with the same boundary conditions, then $u_{\textrm{diff}}\equiv u_1-u_2$ satisfies the heat equation with zero initial and boundary conditions.  By the maximum principle applied to $u_{\textrm{diff}}$ and $-u_{\textrm{diff}}$, we must have $u_{\textrm{diff}}=0$ for all $t$, as needed.  In fact, letting $t\rightarrow\infty$ approaches solutions to the Laplace equation $\Delta u=0$, which thus has the unique solution $u\equiv0$ for our zero boundary conditions.  This observation in turn leads to uniqueness of Poisson equation solutions $\Delta u = g$ since two solutions $u_1$ and $u_2$ satisfy $\Delta(u_1-u_2)=0$.

\subsection{The Hyperbolic Case:  The Wave Equation}

Our final model second-order PDE is the \emph{wave equation}, given by
\begin{equation}
u_{tt}=-\Delta u
\end{equation}
with the same boundary conditions as the heat equation.  We also prescribe $u_t=g$ at $t=0$.  Once again we can compute solutions in closed form as $u=\sum_i a_i\cos(\sqrt{\lambda_i}t+b_i)\phi_i$, where $a_i,b_i$ are chosen to satisfy the given initial conditions; if $\lambda_i=0$ then we replace $\cos(\cdots)$ with $a_i+b_it$.  It is \emph{hyperbolic} because the time derivative has a different sign than the space derivative.  We defer discussion of the uniqueness of wave equation solutions until we have developed appropriate notions of the gradient of a function, which depends more strongly on the domain $\Omega$ in question.

\begin{figure}
\centering
\begin{tabular}{|l|l|p{3.4in}|}\hline
\textbf{Symbol} & \textbf{Description} & \textbf{Notes}\\\hline
$G=(V,E)$ & Graph & $V$ is the set of vertices; $E\subseteq V\times V$ is the set of edges\\
$l_e$ & Edge length & Associated with edge $e\in E$\\
$\G$ & Geometric realization & Constructed by gluing intervals $\subset\R$ of length $l_e$ using the topology of $G$\\
$C^k(\G)$ & Differentiable functions & $C^k$ on edge interiors; $C^0$ at vertices \\
$T\G$ & Tangent bundle & Union of tangent bundles of edge interiors\\
$X$ & Vector field & Function $\G\backslash V\rightarrow T\G$\\
$\nabla f$ & Gradient & Vector field on $\G\backslash V$ constructed using 1D calculus\\
$\nabla_\calc \cdot X$ & Calculus divergence & Scalar function on $\G\backslash V$\\
$\V$ & Vertex measure & Discrete measure on $\G$ with $\V(v)=1\ \forall v\in V$\\
$\E$ & Edge measure & Lebesgue edge measure on $\G$; $\E(e)=l_e\ \forall e\in E$\\
$d\Gamma$ & Integrating factor & Written $\alpha\ d\V + \beta\ d\E$ for integration using $\V$ and $\E$\\
$d\D_X$ & Divergence & Integrating factor with $\nabla_\calc$ on $E$ and discrete terms on $V$\\
$d\L_f$ & Laplacian & Integrating factor given by $-d\D_{\nabla f}$\\
$\Delta_E$ & Edge-based Laplacian & Coefficient of $d\E$ in $d\L_f$ given by $-\nabla_\calc\cdot\nabla f$\\
$\Delta_V$ & Vertex-based Laplacian & Coefficient of $d\V$ in $d\L_f$ given by $\tilde{n}\cdot\nabla f$\\
$\tilde{A}$ & Normalized adjacency matrix & The adjacency matrix of $G$ with rows normalized to sum to $1$ \\
$A^h$ & Set expansion & For $A\subseteq\G$, $A^h=\{x\in\G : \textrm{dist}(x,A)<h\}$\\
\hline
\end{tabular}
\caption{Notation for our discussion of~\cite{friedman04}.}\label{friedmannotation}
\end{figure}

\section{Geometric Realizations of Graphs}

We turn now to the case when $\Omega$ is a graph $G=(V,E)$ for vertices $V$ and undirected edges $E$.  We assume $|V|,|E|<\infty$ and that each edge $e\in E$ is associated with a length $l_e>0$.  This description, however, is a theoretical construct, so we must describe what it means to diffuse heat or propagate waves along a graph; different models of diffusion or differential structures will lead to different adaptations of the model PDEs from \S\ref{basicanalysis}.

We commence our discussion with the most intuitively-appealing construction, which is able to apply analytical results directly but in the end suffers from considerable mathematical challenges.  Our development will follow that of~\cite{friedman04}; we summarize notation in Figure~\ref{friedmannotation}.

\subsection{Preliminaries}

By far the most common visualization of a graph assigns a point in $\R^n$ to each vertex in $V$ and connects the pairs in $E$ using line segments or curves of length $l_e$.  Note that such an embedding is a \emph{construction} rather than a fundamental property of graphs.  For instance, there is no straightforward reason why a social network graph should be embeddable in a low-dimensional space in a meaningful way.  Even so, much of our intuition for graph theory comes from such a construction.

Formalizing this notion, we define the \emph{geometric realization} of $G$ as a topological object $\G$ consisting of a closed interval of length $l_e$ for each edge $e\in E$ whose endpoints are identified with those of other segments using the connectivity of $G$.  We identify each $e\in E$ with its corresponding segment within $\G$ and each $v\in V$ with the corresponding point; we will use ``vertex'' and ``edge'' to refer to either type interchangeably when the difference is unambiguous.

The maximal open interval contained within each edge $e$ of $\G$ is the \emph{interior} of $e$.  Note that edge interiors are simply line segments and as such can be modeled to conduct heat or waves identically to segments of $\R$.  Thus, the primary challenge of this setup is the treatment of differential operators near the points in $V$, which are nonmanifold when vertices have degree $\neq 2$.  Thus, to take advantage of this structure, before developing PDE theories for $\G$, we develop its global differential structure.

We first define coordinate-free differentiation of scalar functions on $\mathcal{G}$.  Take $T\G$ to be the ``tangent bundle'' of $\G$ given by the union of the tangent bundles of the edge interiors, isomorphic to $(\G\backslash V)\times\R$.  Then, for functions $f\in C^k(\G)=C^k(\G\backslash V)\cap C^0(\G)$ and vector fields $X\in C^k(\G\backslash V,T\G)$ we can define the \emph{gradient} $\nabla f\in C^{k-1}(\G\backslash V,T\G)$ and \emph{calculus divergence} $\nabla_\calc\cdot X\in C^{k-1}(\G\backslash V)$ by applying the usual definitions of gradient and divergence within edge interiors.

It is tempting to define the Laplacian of a function on $\G$ in a similar fashion to complete our arsenal of differential graph operators.  Unfortunately, doing so trivializes the heat and wave equations, as we have not specified any way for information to be communicated across the vertices.  Thus, our development must be considerably more subtle.

\subsection{Integration and Divergence}

Our adaptation of differential operators thus far has considered the one-dimensional structures--or edges--of $\G$ but has neglected the zero-dimensional vertices.  Interestingly, vertices of $\G$ appear in our special treatment of the non-manifold points in $\G$ in a form similar to boundary conditions of PDEs on intervals in $\R$.  This observation makes sense in light of the fact that we can consider diffusion and wave propagation along $\G$ to satisfy the usual equations within edge interiors, with boundary conditions that their values match at vertices.

To facilitate subsequent discussion, we introduce two measures on $\G$:
\begin{itemize}
\item A discrete \emph{vertex measure} $\V$ satisfying $\V(v)=1\ \forall v\in V$.
\item A \emph{edge measure} $\E$ coinciding with the usual Lebesgue measure on edge interiors.
\end{itemize}
Note that this definition is somewhat more restrictive than that in~\cite{friedman04}, but our results generalize easily to their more flexible setting.  We denote integration against these two measures using $\int\cdots d\V$ and $\int\cdots d\E$, resp.

Relating integrals against $\V$ and $\E$ will provide a compact methodology for expressing our primary results about PDEs on $\G$.  Thus, we introduce \emph{integrating factors} $d\Gamma=\alpha\ d\V + \beta\ d\E$, for $\alpha:\V\rightarrow\R$ and $\beta\in L_2(\G\backslash V)$; we may introduce continuity or differentiability conditions on $\beta$ as need be.  Integration of functions on $\G$ is carried out as:
\begin{equation}
\int_\G f\ d\Gamma = \int f\alpha\ d\V + \int f\beta\ d\E
\end{equation}

As an initial example of the types of operations we can handle using this setup, we develop a notion of vector field divergence on $\G$ satisfying a theorem analogous to the divergence theorem on $\R^n$.  In general, note that our ``calculus divergence'' is somewhat unsatisfactory in that it yields values only on $\G\backslash V$.  While this restriction might be acceptable for directional vector fields, which most naturally are associated with points on edges, divergence may be measurable at vertices as well as on edges.

We have two potential strategies for relating a given test function $g\in C^\infty(\G)$ and the divergence of a vector field $X$: integrating the product $g\nabla\cdot X$ or integrating the product $-X\cdot\nabla g$.  Applying intuition from integration by parts, these two should be equivalent for the proper choice of $\nabla$.  Unfortunately, carrying out the first expression with the calculus divergence does \emph{not} yield a divergence theorem for graphs, so we explore the second.  Define a functional $\D_X:C^\infty(\G)\rightarrow\R$ given by
\begin{equation}
\D_X(g) = -\int_{\G\backslash V} X\cdot \nabla g\ d\E
\end{equation}
We prove a useful formula for such an operator:
\begin{lemma}[\cite{friedman04}, Proposition 2.16]\label{divergencelemma} Suppose we define $\hat{e}$ to be the unit vector from $u$ to $v$ paralleling edge $e=(u,v)\in E$ in $T\G|_e$ and take
\begin{equation}
(\tilde{n}\cdot X)(u) = \sum_{e=(v,u)\in E} \hat{e}\cdot X|_e(v).
\end{equation}
Then, for $X\in C^1(T\G)$ and $g\in C^\infty(\G)$,
\begin{equation}
\D_X(g) = \int \left[(\nabla_\calc\cdot X)g\ d\E - (\tilde{n}\cdot X)g\ d\V\right],
\end{equation}
where we assume $X$ can be extended to $\partial e=\{u,v\}$ by continuity.
\end{lemma}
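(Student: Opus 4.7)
My plan is to reduce everything to one-dimensional integration by parts on each edge individually, then collate the boundary contributions at each vertex and match them against the definition of $\tilde{n}\cdot X$. Since $\E$ is the Lebesgue measure on edge interiors, the defining integral splits naturally as a sum:
\begin{equation*}
\D_X(g) = -\int_{\G\backslash V} X\cdot \nabla g\ d\E = -\sum_{e\in E}\int_e X\cdot\nabla g\ d\E.
\end{equation*}

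For a fixed edge $e=(u,v)$, I would parametrize $e$ by arc length $s\in[0,l_e]$ with $s=0$ at $u$ and $s=l_e$ at $v$, so that $\hat{e}$ is the tangent in the direction of increasing $s$. Since $T\G|_e$ is one-dimensional, write $X|_e(s)=X_s(s)\hat{e}$ with $X_s(s)=\hat{e}\cdot X|_e(s)$; then $\nabla g|_e=g'(s)\hat{e}$, and the integrand becomes $X_s(s)g'(s)$. Moreover $\nabla_\calc\cdot X|_e = X_s'(s)$ by the single-edge definition of the calculus divergence. Ordinary 1D integration by parts gives
\begin{equation*}
\int_0^{l_e}X_s(s)g'(s)\,ds \;=\; \bigl[X_s\,g\bigr]_0^{l_e} - \int_0^{l_e}(\nabla_\calc\cdot X)(s)\,g(s)\,ds,
\end{equation*}
where the continuous extension of $X$ to the endpoints (assumed in the hypothesis) is what lets us evaluate the boundary term.

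Negating and summing over all edges, the interior integrals collect cleanly into $\int(\nabla_\calc\cdot X)g\,d\E$, which is the first term on the right-hand side of the claim. It remains to reorganize the boundary contributions: every edge $e$ contributes $-X_s(l_e)g(v)+X_s(0)g(u)$, and I would re-sum these over edges incident to a common vertex $w$. At each $w$, each incident edge contributes the component of $X|_e(w)$ along the direction of $\hat{e}$, with a sign determined by whether $w$ is the initial or terminal endpoint of $e$ in the chosen orientation. Matching the resulting expression term-by-term against the definition $(\tilde{n}\cdot X)(u)=\sum_{e=(v,u)}\hat{e}\cdot X|_e(v)$, one sees that the collected vertex contribution is exactly $-(\tilde{n}\cdot X)(w)\,g(w)$, and summing over $w\in V$ rewrites this as $-\int(\tilde{n}\cdot X)g\,d\V$, completing the lemma.

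The main obstacle is the bookkeeping in the last step: the orientation conventions used to define $\hat{e}$ (and hence the sign of $X_s$) must be tracked consistently through the edge-local integration by parts so that the two boundary terms from edges sharing vertex $w$ combine with the correct signs. There is nothing analytically deep here beyond standard 1D calculus, but the index manipulations have to align with the precise directional convention used in the definition of $\tilde{n}\cdot X$, and any off-by-sign slip will yield the incorrect vertex term.
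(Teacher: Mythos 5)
Your proposal is correct and is essentially the paper's own argument: both reduce the claim to one-dimensional calculus on each edge followed by a re-summation of the endpoint contributions at the vertices, matched against the orientation convention in the definition of $\tilde{n}\cdot X$. The only organizational difference is that the paper packages the edgewise integration by parts as two separate steps---first establishing $\int_\G \nabla_\calc\cdot X\ d\E = \int_\G \tilde{n}\cdot X\ d\V$ for a general field by applying Stokes' theorem edge by edge, then applying that identity to the product field $Xg$ together with the product rule $\nabla_\calc\cdot(Xg)=(\nabla_\calc\cdot X)g + X\cdot\nabla g$---which absorbs the per-vertex sign bookkeeping you flag as the main obstacle into a single reusable identity.
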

\begin{proof}
By Stokes' Theorem, if we restrict to a single edge $e=(u,v)$ we clearly have the relation
\begin{displaymath}
\int_e \nabla_\calc \cdot X\ d\E = \hat{e}\cdot (X|_e(v)-X|_e(u)),
\end{displaymath}
where the second term represents a (discrete) integral about the boundary of $e$.  Summing over all edges in $\G$ yields
\begin{displaymath}
\int_\G \nabla_\calc\cdot X\ d\E = \int_\G \tilde{n}\cdot X\ d\V
\end{displaymath}
Consider the vector field $Xg$.  Applying this formula yields:
\begin{align*}
\int_\G \nabla_\calc\cdot(Xg)\ d\E&=\int_\G \tilde{n}\cdot (Xg)\ d\V\\
&= \int_\G (\tilde{n}\cdot X)g\ d\V\textrm{ by linearity of the dot product}
\end{align*}
By the chain rule we can write $\nabla_\calc \cdot (Xg) = (\nabla_\calc \cdot X)g + X\cdot \nabla g$.  Substituting this expression into the integral above and reordering terms yields the desired result.
\end{proof}

Thus, if we define the divergence of $\G$ to be the integrating factor
\begin{equation}\label{divergence}
d\D_X=(\nabla_\calc\cdot X)d\E - (\tilde{n}\cdot X)d\V,
\end{equation}
then we satisfy the formula
\begin{equation}
\int_\G g\ d\D_X + \int_\G X\cdot\nabla g\ d\E = 0.
\end{equation}
This formula serves as our divergence theorem for geometric realizations of graphs.  In particular, if we take $g\equiv1$ then we find
\begin{equation}
\int_\G \ d\D_X=0,
\end{equation}
as expected since the graphs we consider satisfy $\partial\G=\emptyset$.~\cite{friedman04} shows how to generalize this result when a subset of $V$ is marked as $\partial\G$.

\subsection{Geometric Graph Laplacian}\label{geometricgraphlaplacian}

We now can define the Laplacian of a graph's geometric realization:
\begin{equation}
d\L_f = -d\D_{\nabla f}
\end{equation}
Note that in classical notation this definition coincides with the relation $\Delta f = -\nabla\cdot(\nabla f)$, and that our choice of Laplacians parallels that of~\cite{friedman04}.  We define ``edge-based'' and ``vertex-based'' Laplacians $\Delta_E$ and $\Delta_V$ by requiring $d\L_f=\Delta_Ed\E+\Delta_Vd\V$; substituting~\eqref{divergence} yields:
\begin{align}
\Delta_E f &= -\nabla_\calc\cdot\nabla f\\
\Delta_V f &= \tilde{n}\cdot \nabla f
\end{align}
Note that a straightforward consequence of Lemma~\ref{divergencelemma} is the relationship
\begin{equation}\label{laplacianchangeofvariables}
\int g\ d\L_f = \int \nabla f\cdot \nabla g\ d\E = \int f\ d\L_g
\end{equation}
for sufficiently differentiable $f,g$.  It is also easy to see that $\Delta_Ef=0$ implies that $f$ must be edgewise-linear; in this case the vertex-based Laplacian $\Delta_V$ coincides with proposed discrete Laplacian matrices for graphs.  

Before modeling with $d\L_f$, we wish to understand the structure of its eigenspaces.  Of course, $d\L_f$ encodes two separate operators $\Delta_E$ and $\Delta_V$, whose eigenstructures we must understand separately.  Considering $\Delta_V$ or $\Delta_E$ alone is insufficient, since the former operates only at vertices and the latter on edge interiors; instead, we will find that appropriate eigenvalue problems \emph{couple} the two operators, using $\Delta_V$ to put boundary conditions on $\Delta_E$.

Note that $\G\backslash V$ is a collection of open intervals, which clearly does not admit a countable and complete set of eigenpairs $(f_i,\lambda_i)$.  Thus, we must add more conditions to make our problem well-posed.  We provide an alternative motivation based on a variational problem.

Suppose we wish to find critical points of the \emph{Rayleigh quotient}
\begin{displaymath}
\mathcal{R}(f) = \frac{\int |\nabla f|^2\ d\E}{\int |f|^2\ d\E}
\end{displaymath}
It is easy to see that uniformly scaling $f$ has no effect on $\mathcal{R}$, so we equivalently find critical points of $\bar{\mathcal{R}}(f)=\int|\nabla f|^2\ d\E$ subject to $\int|f|^2\ d\E=1$.  Such a variational problem leads to the Lagrange multiplier function:
\begin{displaymath}
\Lambda(f,\lambda)=\bar{\mathcal{R}}(f)-\lambda\left(\int|f|^2\ d\E-1\right)
=\int(|\nabla f|^2-\lambda|f|^2)\ d\E + \lambda
\end{displaymath}
Taking the G\^ateaux derivative with respect to $f$ in the $g$ direction yields:
\begin{align*}
d\Lambda(f,\lambda;g)
&= \frac{d}{dh}\Lambda(f+hg,\lambda)|_{h=0}\\
&= \int \frac{d}{dh}(|\nabla f+h\nabla g|^2-\lambda|f+gh|^2)|_{h=0}\ d\E\\
&= \int (2\nabla g\cdot\nabla f - 2\lambda fg)\ d\E\\
&= 2\int g\ d\L_f - 2\lambda \int fg\ d\E\textrm{ by~\eqref{laplacianchangeofvariables}}\\
&= 2\int g (\Delta_E f - \lambda f)\ d\E + 2\int g\Delta_V f\ d\V\textrm{ by definition of }d\L_f 
\end{align*}
So, if $f$ with $\int|f|^2\ d\E$ is a critical point of $\bar{\mathcal{R}}$, then it must satisfy both of:
\begin{align}
\Delta_E f &= \lambda f\\
\Delta_V f &= 0
\end{align}
for some $\lambda\in\R$.

This derivation motivates the definition of an \emph{edge-based eigenpair} of $d\L_f$ as a pair $(f,\lambda)$ satisfying $f\in C^\infty(\G)$, $\Delta_E f=\lambda f$, and $\Delta_V f = 0$.  Restricting $f$ to an edge interior of $\G$ gives $f(x)=A\cos(\omega x+B)$ for parameter $x\in[0,l_e]$ and $\omega=\sqrt{\lambda}$, since $f$ is an eigenfunction of the one-dimensional Laplacian along edges.  More globally, our Rayleigh quotient construction allows for a proof similar to that in~\cite{evans10} for PDEs on Euclidean domains of the following spectral theorem:

\begin{lemma}[\cite{friedman04}, Proposition 3.2]\label{spectraltheorem} The edge-based Laplacian admits a sequence of eigenpairs $(f_i,\lambda_i)$ satisfying
\begin{itemize}
\item $0\leq \lambda_1\leq\lambda_2\leq\cdots$
\item $\{f_i\}_{i\in\Z^+}$ forms a complete orthonormal basis for $L^2(\G)$, where orthonormality is measured with respect to $d\E$
\item $\lambda_i\rightarrow\infty$
\end{itemize}
\end{lemma}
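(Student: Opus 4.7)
The plan is to construct the eigenpairs inductively using the direct method of the calculus of variations on the Rayleigh quotient, with Rellich-type compactness providing existence of minimizers and the Euler--Lagrange computation already carried out in \S\ref{geometricgraphlaplacian} identifying those minimizers as edge-based eigenfunctions.

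First I would fix the correct function space: let $H(\G)$ consist of continuous $f:\G\to\R$ whose restriction to each edge interior lies in the Sobolev space $H^1$, with the norm $\|f\|_H^2=\int|f|^2\,d\E+\int|\nabla f|^2\,d\E$. Because $\G$ is a finite union of closed intervals glued at vertices, $\G$ is compact and the classical one-dimensional Rellich theorem applied edge-by-edge yields a compact embedding $H(\G)\hookrightarrow L^2(\G,d\E)$; the continuity condition across vertices only cuts out a closed subspace, so compactness survives. I would also record that $\bar{\mathcal{R}}(f)=\int|\nabla f|^2\,d\E$ is convex and weakly lower semicontinuous on $H(\G)$.

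Next I would construct $(f_1,\lambda_1)$ by minimizing $\bar{\mathcal{R}}$ over $\{f\in H(\G):\int|f|^2\,d\E=1\}$. Any minimizing sequence is bounded in $H(\G)$, so by compactness a subsequence converges in $L^2$ and weakly in $H(\G)$ to some $f_1$ on the sphere; lower semicontinuity makes $f_1$ a minimizer and $\lambda_1:=\bar{\mathcal{R}}(f_1)\geq 0$. The G\^ateaux derivative calculation from \S\ref{geometricgraphlaplacian} shows that $f_1$ must satisfy $\Delta_E f_1=\lambda_1 f_1$ on edge interiors (testing against $g$ supported away from $V$) and $\Delta_V f_1=0$ at every vertex (testing against $g$ supported in a vertex neighborhood). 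The edge ODE $-f''=\lambda_1 f$ gives smoothness on edge interiors, and the already-established continuity and vertex condition place $f_1\in C^\infty(\G)$ in the sense of Figure~\ref{friedmannotation}. I would then iterate: having produced $L^2$-orthonormal $f_1,\ldots,f_n$, define $f_{n+1}$ by minimizing $\bar{\mathcal{R}}$ over the unit sphere intersected with $\{\int f f_j\,d\E=0\ \forall j\leq n\}$, which is $L^2$-closed hence preserved under the weak-limit argument; the constrained variational derivation produces $\Delta_E f_{n+1}=\lambda_{n+1}f_{n+1}$ and $\Delta_V f_{n+1}=0$ with $\lambda_{n+1}\geq\lambda_n$ since each successive minimization is over a smaller set.

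For the divergence $\lambda_i\to\infty$ I would argue by contradiction: if $\lambda_i\leq M$ uniformly, then $\{f_i\}$ is bounded in $H(\G)$, so by compactness a subsequence converges in $L^2(\G,d\E)$; but $\|f_i-f_j\|_{L^2}^2=2$ by orthonormality for $i\neq j$, which is impossible. Finally, for completeness, let $g\in L^2(\G,d\E)$ and set $r_n=g-\sum_{i\leq n}\langle g,f_i\rangle f_i$. A standard density argument (approximate $g$ by elements of $H(\G)$, for which an $H$-norm bound on $r_n$ is available) combined with the min-max characterization $\|r_n\|_{L^2}^2\leq\bar{\mathcal{R}}(r_n)/\lambda_{n+1}$ on the orthogonal complement forces $r_n\to 0$ in $L^2$. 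The main obstacle is establishing the Rellich compactness of $H(\G)\hookrightarrow L^2(\G,d\E)$ and verifying that the vertex-continuity constraint is weakly closed, since the nonmanifold vertices are where the usual Euclidean Sobolev theory has to be adapted; once that is in hand, the rest of the proof is the classical Courant--Hilbert eigenvalue construction transposed to the metric graph.
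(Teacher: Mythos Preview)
Your proposal is correct and is precisely the approach the paper has in mind: the paper does not give its own proof but merely remarks that ``our Rayleigh quotient construction allows for a proof similar to that in~\cite{evans10},'' i.e., the direct-method/Courant--Hilbert argument you outline. Your only substantive additions beyond what the paper sketches are the explicit identification of the Sobolev space $H(\G)$ and the edge-by-edge Rellich compactness, both of which are exactly the adaptations needed to transport the Euclidean argument to the metric graph setting.
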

In fact, it is straightforward to extend Weyl's Law to show that the number $N$ of eigenvalues $\leq\lambda$ for fixed $\lambda$ grows like $\sqrt{\lambda}$.

Returning to properties specific to the geometric graph Laplacian, we find that the $\Delta_V f=0$ condition provides a strong characterization of edge-based eigenfunctions.  In particular, for eigenvalue $\lambda>0$, recall that the corresponding eigenfunction $f$ restricted to an edge $e=(u,v)$ is given by $f|_e(x)=A\cos(\omega x+B)$ for $x\in[0,l_e]$, $\omega=\sqrt{\lambda}$, and some $A,B\in\R$.  We evaluate $f|_e$ and its derivatives at the endpoints:
\begin{align*}
f(u)&=f_e(0)=A\cos B\\
f(v)&=f_e(l_e)=A\cos(\omega l_e+B)=A\cos(\omega l_e)\cos(B)-A\sin(\omega l_e)\sin(B)\\
f'(0)&=-A\omega\sin B=-\omega\frac{f(v)-\cos(\omega l_e)f(u)}{\sin(\omega l_e)}\textrm{ by combining the last two expressions}
\end{align*}
Summing the final expression around $u$ is exactly the vertex-based Laplacian at $u$, yielding the following condition for values of $f$ at vertex $u$:
\begin{equation}\label{zerovertexcondition}
0=\Delta_V f|_u=\sum_{e=(u,v)\in E} \frac{f(v)-\cos(\omega l_e)f(u)}{\sin(\omega l_e)}
\end{equation}
In fact, any $\omega$ satisfying~\eqref{zerovertexcondition} is an edge-based eigenvalue, since we can use our cosine expression for $f|_e$ to fill in the remaining eigenfunction in edge interiors.  In other words, we have proved the following lemma:
\begin{lemma}
$(f,\omega^2)$ with $\omega l_e\neq k\pi$ for any $e\in E,k\in\Z$ is an edge-based eigenpair if and only if $\omega$ satisfies~\eqref{zerovertexcondition} for all $u\in V$ and $\Delta_E f = \omega^2 f$.
\end{lemma}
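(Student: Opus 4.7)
The plan is to read off both implications directly from the computation that was carried out in the paragraphs preceding the lemma, together with the nondegeneracy hypothesis $\omega l_e \neq k\pi$.

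For the forward direction, I would argue that if $(f,\omega^2)$ is an edge-based eigenpair then by definition we already have $\Delta_E f = \omega^2 f$, so only the vertex condition remains. Restricting $f$ to each edge $e=(u,v)$, the equation $\Delta_E f = \omega^2 f$ is the 1D ODE $-f''=\omega^2 f$, whose solutions are $f|_e(x) = A\cos(\omega x + B)$. The endpoint identities already displayed in the excerpt yield $f'(0) = -\omega\,\frac{f(v)-\cos(\omega l_e)f(u)}{\sin(\omega l_e)}$, which requires exactly the nondegeneracy assumption $\omega l_e \neq k\pi$ to ensure $\sin(\omega l_e)\neq 0$. Summing these one-sided derivatives around $u$ gives $\Delta_V f|_u$ up to the common factor $-\omega$, so $\Delta_V f|_u = 0$ translates to~\eqref{zerovertexcondition}.

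For the reverse direction, the point is that each step above is reversible. Suppose $f\in C^\infty(\G)$ satisfies $\Delta_E f=\omega^2 f$ on every edge interior and~\eqref{zerovertexcondition} at every vertex. On a fixed edge $e=(u,v)$, the equation $\Delta_E f = \omega^2 f$ forces $f|_e(x) = A_e\cos(\omega x+B_e)$, and because $\sin(\omega l_e)\neq 0$ the prescribed vertex values $f(u),f(v)$ determine $A_e,B_e$ uniquely. Differentiating the cosine form and evaluating at the endpoints produces the same one-sided derivative formula as above. Summing around $u$ and factoring out $-\omega$ shows that $\Delta_V f|_u$ equals $-\omega$ times the left-hand side of~\eqref{zerovertexcondition}, which vanishes by hypothesis. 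Thus $\Delta_V f = 0$, so $(f,\omega^2)$ meets every clause in the definition of an edge-based eigenpair.

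The only step that requires any care — and which I would flag as the main obstacle — is verifying that the global sum $\tilde n\cdot \nabla f|_u$ really collapses to the form appearing in~\eqref{zerovertexcondition} with a uniform factor of $-\omega$. This reduces to bookkeeping: consistently orienting each incident edge so that the unit vector $\hat e$ and the parameter $x$ both point from $u$ to the opposite endpoint, so that the one-sided derivative at $u$ coincides with $f'(0)$ in the cosine parameterization, and the sign in $\Delta_V f = \tilde n\cdot \nabla f$ agrees with the sign produced by the Stokes-theorem derivation in Lemma~\ref{divergencelemma}. Once that bookkeeping is fixed, the rest of the argument is routine algebra that has already been exhibited in the excerpt, and the equivalence follows.
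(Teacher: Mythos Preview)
Your proposal is correct and follows essentially the same approach as the paper. The paper does not provide a separate proof environment for this lemma; it simply states ``In other words, we have proved the following lemma'' after carrying out the very computation you describe, together with the one-sentence observation that any $\omega$ satisfying~\eqref{zerovertexcondition} yields an eigenfunction by filling in the cosine form on edge interiors---which is exactly your reverse direction. Your attention to the orientation and sign bookkeeping is more explicit than what the paper spells out, but the underlying argument is identical.
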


This expression is encouraging computationally in that it gives a finite system of equations for finding edge-based eigenvalues.  Unfortunately, the condition is highly nonlinear and nonconvex in $\omega$, so we cannot in general find the complete spectrum of a graph with arbitrary edge lengths.

One case in which it is possible to find a closed form for the edge-based spectrum of $\G$ is when $l_e=1\ \forall e\in E$.  Then, the denominator of~\eqref{zerovertexcondition} has no dependence on $e$ and can be removed, leaving behind a much simpler expression that can be solved for $\omega$.  

\begin{lemma}[\cite{friedman04}, Proposition 3.5]\label{adjacencylemma}
Let $\tilde{A}$ be the adjacency matrix of $G$ after normalizing each row to sum to $1$.  Then, for each eigenvalue $\lambda$ of $\tilde{A}$, $\arccos(\lambda)+2\pi k$ and $2\pi-\arccos(\lambda)+2\pi k$ are edge-based eigenvalues for $k\in\Z^+$ with the same multiplicities.  Furthermore, eigenvalues $\pi+\pi k$ occur with multiplicity $|E|-|V|$; if this value is negative, we \emph{subtract} the multiplicity from those occurring in $\tilde{A}$.  
\end{lemma}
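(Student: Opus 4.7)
The plan is to specialize equation~\eqref{zerovertexcondition} to the uniform edge length $l_e=1$ and split on whether $\sin\omega$ vanishes. In the generic case $\omega\notin\pi\Z$, the common factor $\sin\omega$ is nonzero and cancels from every term at each vertex $u$, yielding $\sum_{v\sim u}f(v)=\cos\omega\cdot\deg(u)\cdot f(u)$, which is precisely $\tilde{A}f|_V=\cos\omega\cdot f|_V$. Conversely, any eigenvector $g$ of $\tilde{A}$ with eigenvalue $\lambda$ extends uniquely to an edge-based eigenfunction via $f|_e(x)=g(u)\cos(\omega x)+\frac{g(v)-g(u)\cos\omega}{\sin\omega}\sin(\omega x)$, well-defined precisely because $\sin\omega\neq 0$. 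So the map $\omega\mapsto\cos\omega$ intertwines the edge-based spectrum with that of $\tilde{A}$; since it is $2$-to-$1$ on each interval $(2\pi k,2\pi(k+1))$ away from $\pi+2\pi k$, the two preimages $\arccos\lambda+2\pi k$ and $2\pi-\arccos\lambda+2\pi k$ give the two listed families with multiplicities matching those of $\tilde{A}$.

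The degenerate case $\omega=k\pi$ must be handled directly, since~\eqref{zerovertexcondition} becomes $0/0$. The edgewise ansatz $f|_e(x)=A_e\cos(k\pi x+B_e)$ combined with $\cos(k\pi+B)=(-1)^k\cos B$ forces $f(v)=(-1)^k f(u)$ on every edge, so $f|_V$ must lie in the $(\pm 1)$-eigenspace of $\tilde{A}$. Reparametrizing as $f|_e(x)=f(u_e)\cos(k\pi x)+\alpha_e\sin(k\pi x)$ exposes one extra per-edge parameter $\alpha_e$. Substituting into $\Delta_V f=0$ and using $\sin(k\pi)=0$, I expect the condition to decouple from $f|_V$ entirely and reduce to a linear system on $\alpha=(\alpha_e)_{e\in E}$ whose matrix is, up to signs depending on the parity of $k$, a vertex-edge incidence matrix of $G$. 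Its kernel supplies the ``internal'' eigenfunctions; adding this dimension to the $\tilde{A}$-vertex-value contribution yields the total multiplicity at $\omega=k\pi$, which the lemma records compactly as $|E|-|V|$, the subtraction clause covering $|E|<|V|$ (e.g.\ trees) by cancellation against the $\pm 1$-eigenspace of $\tilde{A}$.

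The main obstacle I anticipate is the bookkeeping in this degenerate case: tracking orientation-dependent signs in the definition of $\alpha_e$, computing the rank of the signed versus unsigned incidence matrix in terms of bipartite and connected components of $G$, and reconciling the resulting dimension count with both the $\tilde{A}$-multiplicities of $\pm 1$ and the lemma's succinct $|E|-|V|$. The generic case is essentially immediate once~\eqref{zerovertexcondition} is divided through by $\sin\omega$; almost all the real work lives in the $\omega=k\pi$ analysis.
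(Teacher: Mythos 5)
The paper itself omits the proof of this lemma (deferring to \cite{friedman04}), so there is no in-paper argument to compare against; judged on its own, your plan is the right one and the generic half is essentially complete. Dividing \eqref{zerovertexcondition} by $\sin\omega\neq0$ does give $\tilde{A}f|_V=\cos(\omega)\,f|_V$, the edgewise extension you write down is the unique solution of $f''=-\omega^2f$ with the prescribed endpoint values (unique precisely because $\sin\omega\neq0$, which also makes it independent of the choice of orientation of $e$), and the outgoing-derivative sum at each vertex then vanishes exactly because $g$ is a $\cos\omega$-eigenvector. One point worth stating explicitly: at $\lambda=\pm1$ the two families $\arccos\lambda+2\pi k$ and $2\pi-\arccos\lambda+2\pi k$ coincide as sets, so the lemma's bookkeeping counts the $\tilde{A}$-multiplicity of $\pm1$ \emph{twice} at $\omega\in\pi\Z$; your degenerate-case count must reproduce that.

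It does, and the ``bookkeeping'' you flag as the main obstacle closes cleanly. Writing $f|_e(x)=a_e\cos(k\pi x)+b_e\sin(k\pi x)$ with $u_e$ at $x=0$, continuity forces $f(v)=(-1)^kf(u)$ on every edge, so the $a_e$'s contribute exactly the $(+1)$-eigenspace of $\tilde{A}$ (dimension $c$, the number of connected components) when $k$ is even and the $(-1)$-eigenspace (dimension $b$, the number of bipartite components) when $k$ is odd. The vertex condition involves only the $b_e$'s and reads $\sum_{e:\,u_e=w}b_e-(-1)^k\sum_{e:\,v_e=w}b_e=0$: for $k$ even this is the oriented incidence matrix, whose kernel (the cycle space) has dimension $|E|-|V|+c$; for $k$ odd it is the unsigned incidence matrix, whose kernel has dimension $|E|-|V|+b$. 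The totals $2c+(|E|-|V|)$ and $2b+(|E|-|V|)$ are exactly ``twice the $\tilde{A}$-multiplicity of $\pm1$ plus $|E|-|V|$,'' with the subtraction clause covering $|E|<|V|$ (trees give $2c-1=1$, correctly). So nothing in your outline fails; the only gap is that the degenerate case is stated as an expectation rather than carried out, and the rank computations above are what you would need to supply to finish it.
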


We omit the analytical details of the proof.

\subsection{Geometric Graph Wave Equation}

Having developed differential operators on $\G$ up to second order, we are now equipped to introduce the general wave equation on $\G$:
\begin{equation}\label{geometricwave}
(\alpha d\V + \beta d\E)u_{tt} = d\D_{\gamma\nabla u}
\end{equation}
Here, we make the following assumptions:
\begin{itemize}
\item $u\in C^0(\G\times I)$ with $u(\cdot,t)\in C^2(\G)$ for some interval $I\subseteq \R$
\item $\alpha,\beta\in C^0(\G)$ and $\gamma\in C^1(\G)$ with $\alpha,\beta,\gamma\geq0$
\end{itemize}
Note if $\gamma\equiv1$ then the equation becomes $(\alpha d\V+\beta d\E)u_{tt}=-d\L_u$, mimicking the classical wave equation.  Splitting into $d\V$ and into $d\E$ terms yields the following system of equations:
\begin{align}
\alpha u_tt &= -\gamma\Delta_V u\\
\beta u_tt &= -\gamma\Delta_E u + \nabla\gamma\cdot\nabla u
\end{align}

To prove basic properties of~\eqref{geometricwave}, we will examine the energy function given by
\begin{equation}\label{geometricwaveenergy}
\textrm{Energy}(A;t)=\int_A (\gamma(\nabla u)^2\ d\E + u_t^2(\alpha\ d\V+\beta\ d\E))
\end{equation}
For $A\subseteq\G$ define $A^h=\{x\in\G : \textrm{dist}(x,A)<h\}$.  The following theorem characterizes the evolution of this energy function over time:

\begin{lemma}
Suppose $\gamma\leq c^2\beta$ on $\G$.  Then, if $u$ satisfies~\eqref{geometricwave} then $\textrm{Energy}(A^{ct_0};0)\geq\textrm{Energy}(A;t_0).$
\end{lemma}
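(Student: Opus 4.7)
The plan is a classical "domain of dependence" argument adapted to $\G$. Define the shrinking set $A(t) := A^{c(t_0-t)}$ for $t\in[0,t_0]$, so that $A(0) = A^{ct_0}$ and $A(t_0) = A$, and set $E(t) := \textrm{Energy}(A(t);t)$. The desired inequality is then $E(0)\ge E(t_0)$, which I would establish by showing $E'(t)\le 0$ for a.e.\ $t$. Since $\partial A(t)$ consists of finitely many points, each living in the interior of some edge and sliding inward at speed $c$, a one-dimensional Leibniz rule applied edge-by-edge gives
\begin{equation*}
E'(t) = \int_{A(t)} \partial_t(\textrm{integrand})\,d\Gamma \;-\; c\sum_{p\in\partial A(t)}\bigl(\gamma(\nabla u)^2 + \beta u_t^2\bigr)(p).
\end{equation*}

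Next I would differentiate the integrand and substitute the two component wave equations $\alpha u_{tt}=-\gamma\Delta_V u$ and $\beta u_{tt}=-\gamma\Delta_E u + \nabla\gamma\cdot\nabla u$. On edge interiors the product rule gives
\begin{equation*}
\partial_t\!\left[\gamma(\nabla u)^2 + \beta u_t^2\right] = 2\gamma\nabla u\cdot\nabla u_t + 2u_t(-\gamma\Delta_E u + \nabla\gamma\cdot\nabla u) = 2\nabla_\calc\!\cdot(\gamma u_t\nabla u),
\end{equation*}
while at vertices $\partial_t(\alpha u_t^2) = -2\gamma u_t(\tilde n\cdot\nabla u)$. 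Together these expressions are precisely $2\,d\D_{\gamma u_t\nabla u}$, so applying Lemma~\ref{divergencelemma} on the subgraph $A(t)$ makes all interior-vertex terms cancel against the edge divergence, leaving only flux through the edge-boundary points:
\begin{equation*}
\int_{A(t)} \partial_t(\textrm{integrand})\,d\Gamma = 2\sum_{p\in\partial A(t)} \gamma u_t\,(\hat\nu\cdot\nabla u)(p),
\end{equation*}
where $\hat\nu$ is the unit tangent at $p$ pointing out of $A(t)$ along the edge containing $p$.

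Combining, the contribution of each $p\in\partial A(t)$ to $E'(t)$ is $2\gamma u_t u_\nu - c\gamma u_\nu^2 - c\beta u_t^2$, where $u_\nu := \hat\nu\cdot\nabla u$. The weighted AM--GM inequality $2\gamma u_t u_\nu \le c\gamma u_\nu^2 + (\gamma/c)u_t^2$ bounds this by $(u_t^2/c)(\gamma - c^2\beta)$, which is $\le 0$ under the hypothesis $\gamma\le c^2\beta$. Therefore $E'(t)\le 0$ a.e.\ on $[0,t_0]$, and integrating yields $E(0)\ge E(t_0)$, which is the claim.

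The main obstacle I anticipate is the technical one of making the Leibniz computation rigorous at the isolated times when $\partial A(t)$ coincides with a vertex of $\G$: there the set $\partial A(t)$ changes discontinuously as the moving boundary merges with or splits off from nonmanifold points, and one must check that $E$ remains absolutely continuous (or that any jumps have the favorable sign) so that the pointwise inequality $E'(t)\le 0$ integrates to the stated conclusion. Everything else is combinatorial bookkeeping of Stokes' theorem on each edge plus a single invocation of AM--GM, which is precisely where the finite-speed hypothesis $\gamma\le c^2\beta$ enters.
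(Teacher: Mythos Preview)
Your proposal is correct and is precisely the classical domain-of-dependence argument the paper has in mind: the paper does not actually spell out a proof, merely remarking that ``the proof of this theorem once again follows that of analogous results from classical PDEs.'' Your computation that $\partial_t(\textrm{integrand}) = 2\,d\D_{\gamma u_t\nabla u}$, the edge-by-edge Stokes reduction to boundary flux, and the AM--GM step at each boundary point are exactly the ingredients that argument requires, and your identification of the vertex-crossing times as the only genuine technical wrinkle is apt.
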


The proof of this theorem once again follows that of analogous results from classical PDEs.  It immediately implies uniqueness of solutions to the wave equation given $u$ and $u_t$ at $t=0$.  It also shows that if two solutions agree on some subset of $\G$, then they are guaranteed to agree on a subset of $\G$ shrinking at rate $c$ over time.  That is, the wave equation~\eqref{geometricwave} with $\beta>0$ can be associated with a speed of propagation $c$ at which information can be communicated along $\G$; this property does not hold for most formulations of the wave equation on graphs.  For example, taking $\beta=0$ corresponds to an edgewise-linear model of wave propagation, and it can be shown that a ``hat function'' about any vertex in $\G$ can propagate with infinite speed in this case.

Using arguments similar to our earlier discussion, it is easy to see that when all edges of $\G$ have the same length we can find a closed-form solution to the wave equation without considering edge interiors.  Such a solution technique uses the Chebyshev polynomials in the matrix $\tilde{A}$ to expand $\cos(n\tilde{A})$ for $n\in\Z^+$.

\subsection{Discussion and Applications}

\cite{friedman04a,friedman04} provide many theoretical applications of the geometric graph Laplacian and other differential structures on $\G$.  Spectral bounds on the diameter of $G$ and related quantities can be derived by applying analytical results involving Laplacian eigenvalues.  Additional results on graph structure can be derived by studying the heat equation with respect to the vertex Laplacian, by considering the class of edgewise-linear functions~\cite{friedman04a}; a more generic heat equation similar to~\eqref{geometricwave} is not considered.

This construction of ``calculus on graphs'' is very similar to the idea of a ``quantum'' or ``metric'' graph appearing in the theoretical physics literature paired with Neumann or natural matching boundary conditions~\cite{gnutzmann06}.  While some results from either field may be applicable to the other, the physical interpretations of  PDEs on graphs are different in the two cases.  For instance, quantum graphs unsurprisingly involve the wave equations appearing in quantum physics, whereas our constructions may be better interpreted as more classical oscillations of a graph realized using a series of wires.

Such geometrizations appear to be useful tools for proving theoretical bounds on graph structures. In particular, many theorems regarding large-scale behavior of assorted differential equations carry over directly to the graph case with little to no modification, as we have seen studying Laplacian eigenfunctions and the propagation of waves.  As a computational tool, however, these constructions can be cumbersome.  In particular, when edge lengths are nonuniform, we have no straightforward characterization of Laplacian eigenfunctions or solutions of the wave equation and are forced to resort to approximation by chains of unit edges or finite elements simulation.

\section{Discrete Operators on Graphs}\label{discreteoperators}

Rather than posing PDE problems on the continuous topology given by $\G$, we can attempt to formulate \emph{analogous} PDE-type equations on the original discrete graph $G$.  In this case, rather than applying PDE theory directly, we use it to inspire potential fully-discrete results.  These techniques have the advantage that they generally involve finite-dimensional linear algebra in $\R^{|V|}$ rather than more complex differential theory, even if Laplacians cannot be derived as easily from continuous theory.

\subsection{Graph Laplacians}\label{graphlaplacians}

Taking a reverse perspective for deriving potential graph Laplacians, we can examine patterns that appear in discretizations of continuous PDEs, in hopes that the Laplacian matrices from these domains have similar properties to the continuous operators that appear in the limit.  For instance, suppose we approximate a one-dimensional $u:\R\rightarrow\R$ with a sequence of values $\ldots,u_{-1},u_0,u_1,\ldots$.  Simple finite differencing approximates $\nicefrac{du}{dx}$ with the sequence differences $u_i-u_{i-1}$.  Taking differences of these differences yields a second derivative $u_{i+1}-2u_i+u_{i-1}$, or convolution with a filter given by $(\begin{tabular}{ccc}1 & -2 & 1\end{tabular})$.  A similar approach on an $n$-dimensional grid yields a weight of $-n$ on each central vertex and $1$ on each neighbor.  This construction suggests a ``combinatorial'' Laplacian for per-vertex functions given as vectors in $\R^{|V|}$, given by
\begin{equation}
L(u,v)=
\left\{
\begin{array}{ll}
\degree u & \textrm{ if }u=v\\
-1 & \textrm{ if $u$ and $v$ are adjacent}\\
0 & \textrm{ otherwise}
\end{array}
\right.
\end{equation}

Similarly, consider random walks along a graph $G$ with transition matrix $P$.  That is $P(u,v)$ is the probability that state $u$ transitions to state $v$ where $u$ and $v$ are adjacent in the graph, and otherwise $P(u,v)=0$.  A positive vector $x\in\R^{|V|}$ summing to $1$ can represent the probability that a particle is at one of the vertices of $G$; $Px$ then gives the distribution of potential particle positions after one time step.

One might study \emph{stationary distributions} of such a Markov process, that is, probability distributions on the vertices of a graph that do not change after a step of the random walk.  These distributions are given by the equation $Px=x$, or equivalently $(I-P)x=0$.  Notice the form of the matrix $I-P$:
\begin{equation}
[I-P](u,v)=
\left\{
\begin{array}{ll}
1 & \textrm{ if }u=v\\
-P(u,v) & \textrm{ if $u$ and $v$ are adjacent}\\
0 & \textrm{ otherwise}
\end{array}
\right.
\end{equation}
In particular, if a vertex $u$ has equal probability of transitioning to any of its neighbors, this matrix equal to $T^{-1}L$, where $T$ is the diagonal matrix of vertex degrees.

Motivated by these two examples and using the notation of~\cite{chung00}, for a graph $G=(V,E)$ with edge weights $w_e$ we define the \emph{combinatorial Laplacian} matrix $L$ of $G$ to be:
\begin{equation}
L(u,v)=
\left\{
\begin{array}{ll}
d_u & \textrm{ if }u=v\\
-w_{(u,v)} & \textrm{ if $u$ and $v$ are adjacent}\\
0 & \textrm{ otherwise}
\end{array}
\right.
\end{equation}
with $d_u=\sum_{(u,v)\in E} w_{(u,v)}$.  Using our above example and analogous appearances of the Laplacian in continuous probability theory, we also define a \emph{discrete Laplacian} $\Delta=T^{-1}L$, for a diagonal matrix $T$ containing the values $d_u$.  Neither $L$ nor $\Delta$ is necessarily symmetric, so we define a symmetric \emph{normalized Laplacian} matrix $\LL$ satisfying $\LL=T^{1/2}\Delta T^{-1/2}=T^{-1/2} LT^{-1/2}$.

We take $S\subseteq V$ to be a subset of vertices of the graph and say a function $g\in\R^{|V|}$ satisfies \emph{Dirichlet} boundary conditions if it is zero on $V\backslash S$.  We denote the Dirichlet eigenvalues $\lambda_1\leq\lambda_2\leq\cdots\leq\lambda_s$ of the graph, for $s=|S|$ to be eigenvalues of $\LL_S$, the submatrix of $\LL$ corresponding to the vertices in $S$.

\subsection{Green's Functions}\label{greensfunctions}

The Green's function of a PDE is its solution operator.  Specifically, suppose a PDE on $\R$ takes the form $Lu(x)=f(x)$ for unknown $u$ and linear differential operator $L$.  Then, the Green's function $G:\R^2\rightarrow\R$ (more precisely a distribution) provides solutions $u$ for any $f$ by convolution:
\begin{equation}
u(x)=\int G(x,s)f(s)\ ds
\end{equation}
Integration against $G$ effectively provides an ``inverse'' for $L$, although boundary conditions must be established to justify such a statement.

While rigorously establishing that $G$ provides a unique solution to a given PDE can be an analytical challenge, the construction and manipulations of \emph{discrete} Green's functions is much simpler.  In particular, operators like the graph Laplacians from \S\ref{graphlaplacians} are simply matrices in $\R^{|V|\times|V|}$, so their Green's functions intuitively can be computed using matrix inverses.

One way to understand the construction of discrete Green's functions is through eigen-analysis on the matrix $\LL_S$.  Imitating the proof for geometric Laplacians in \S\ref{geometricgraphlaplacian}, we can see that the minimal eigenvalue $\lambda_1$ of $\LL_S$ can be computed as:
\begin{equation}
\lambda_1 = \inf_f \frac{\sum_{x,y\in S\cup\delta S} (f(x)-f(y))^2w_{(x,y)}}{\sum_{x\in S} f^2(x)d_x}
\end{equation}
Here, we define the boundary $\delta S$ as those vertices in $V\backslash S$ adjacent to vertices in $S$.  If $S$ is connected and $\delta S$ is nonempty, then this expression shows $\lambda_1>0$, and thus we can construct an inverse $\GG$ of $\LL_S$.  We consider $\GG$ to be an operator on $\R^{|V|}$, extending to vertices in $V\backslash S$ using zeros.  Then, we can write a matrix $G=T^{-1/2}\GG T^{1/2}$ providing a discrete Green's function for $\Delta$; that is, it satisfies $G\Delta=I$ on $S$.

One can use Green's functions of elliptic operators like the Laplacian to interpret solutions to more complex PDEs.  For instance, suppose for $t\geq0$ we define the \emph{heat kernel} $\HH_t=e^{-t\LL_S}$ using matrix exponentiation.  Then, the following heat equation is satisfied:
\begin{equation}\label{graphheat}
\frac{d}{dt}\HH_t f=-\LL_Sf
\end{equation}
The Green's function in this case appears using the relationship $\GG=\int_0^\infty \HH_t\ dt$.  For this reason, discrete Green's functions are of interest as the basic building blocks for studying not only their associated operators but also other equations in which they appear.

While Green's functions may be best described using matrix inverses, algorithmically this description leads to inefficient techniques:  inverses of the sparse matrices $\LL_S$, $L_S$, and $\Delta_S$ are not guaranteed to be sparse, and we have provided no guarantees on their conditioning.  If we expand the matrix exponential in~\eqref{graphheat}, however, we see that large eigenvalues of $\LL$ are dampened quickly as time $t$ progresses, so it is possible to focus computations on eigenvalues of $\LL_S$ close to $0$.

Since $\LL_S$ is symmetric and positive definite, we can write its eigenfunctions as $\phi_1,\phi_2,\ldots,\phi_{|S|}$ with corresponding eigenvalues $0<\lambda_1\leq\lambda_2\leq\cdots\leq\lambda_{|S|}$.  Applying the usual eigenvector matrix decomposition element-by-element, we can write
\begin{equation}\label{greensonLLs}
\GG(x,y)=\sum_i \frac{1}{\lambda_i} \phi_i(x)\phi_i(y)
\end{equation}
Thus, as $\lambda\rightarrow\infty$, the influence of $\lambda$ on $\GG$ vanishes.  This heuristic argument aside, if we compute the whole spectrum of $\GG$ we can use it to construct exact solutions to vertex-based PDEs on graphs explicitly.

\subsection{Solving the Laplace and Poisson Equations on Graphs}

Suppose we wish to find $f$ satisfying $\Delta f=g$ for some $g$ with nontrivial values on $S\cup \delta S$.  We can use the eigenvalue constructions of \S\ref{greensfunctions} to find such a solution explicitly.  We first consider the Laplace equation $\Delta f = 0$ and then extend our solution to the more general $\Delta f=g$: 

\begin{lemma}[\cite{chung00}, Theorem 1]\label{greensfunctionform}
Suppose $\Delta f=0$ on $S$ with $f(x)=\sigma(x)$ for $x\in\delta S$.  Then, $f$ satisfies\footnote{As an aside, the exponent of $d_x$ is correctly $-\frac{1}{2}$ rather than $\frac{1}{2}$ in~\cite{chung00}.}
\begin{equation}
f(z)=d_z^{-1/2}
\sum_i\frac{1}{\lambda_i}\phi_i(z)
\sum_{\substack{x\in S\\(x,y)\in E\\y\in\delta S}}
d_x^{-1/2}\phi_i(x)\sigma(y)
\end{equation}
\end{lemma}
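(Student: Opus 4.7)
The plan is to convert the Dirichlet boundary-value problem into an inhomogeneous linear system on $S$ alone, then invert using the discrete Green's function from \S\ref{greensfunctions}. Since $\Delta(z,y) = -w_{(z,y)}/d_z$ for adjacent $z\neq y$, the row of $\Delta$ at $z\in S$ only touches $z$ and its graph neighbors, so
\[
0 = (\Delta f)(z) = (\Delta_S f|_S)(z) - \frac{1}{d_z}\sum_{\substack{(z,y)\in E\\ y\in\delta S}} w_{(z,y)}\,\sigma(y).
\]
This rearranges to $\Delta_S f|_S = g$, where $g(z) = \tfrac{1}{d_z}\sum_{(z,y)\in E,\ y\in\delta S} w_{(z,y)}\,\sigma(y)$ captures the flux of the known boundary data across the $S$--$\delta S$ cut.

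Next, I would invert $\Delta_S$ using the similarity $\Delta = T^{-1/2}\LL T^{1/2}$, which conjugates the non-symmetric $\Delta$ to the symmetric $\LL$. The operator $G = T^{-1/2}\GG T^{1/2}$ defined in \S\ref{greensfunctions} satisfies $G\Delta_S = I$ on $S$, so $f|_S = Gg$. Writing this out entrywise and substituting the spectral formula \eqref{greensonLLs} for $\GG$ yields
\[
f(z) = d_z^{-1/2}\sum_i \frac{\phi_i(z)}{\lambda_i}\sum_{x\in S} \phi_i(x)\, d_x^{1/2}\, g(x).
\]
Substituting the definition of $g$ and using $d_x^{1/2}\cdot (1/d_x) = d_x^{-1/2}$ then collapses the inner sum into the doubly-indexed sum over edges $(x,y)$ crossing from $S$ to $\delta S$, producing the $d_x^{-1/2}$ factor appearing in the statement (with the edge weights $w_{(x,y)}$ absorbed in the unweighted setting implicit in \cite{chung00}).

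The main obstacle is the bookkeeping of the $T^{\pm 1/2}$ conjugation between $\Delta$ and $\LL$: the spectral data $\{\phi_i,\lambda_i\}$ naturally live on $\LL_S$, while the PDE is posed on $\Delta_S$, so one must track carefully where each factor of $d^{1/2}$ or $d^{-1/2}$ lands. This is precisely the subtlety flagged by the footnote correcting \cite{chung00}. A secondary point worth verifying is that $\GG = \LL_S^{-1}$ exists in the first place, which follows from the strict positivity $\lambda_1>0$ discussed in \S\ref{greensfunctions} whenever $S$ is connected with nonempty $\delta S$.
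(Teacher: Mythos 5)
Your proposal is correct, and it reaches the stated formula by a route that is organized differently from the paper's. The paper never forms the reduced system $\Delta_S f|_S = g$; instead it introduces the boundary-extension function $f_0$ (equal to $\sigma$ on $\delta S$ and $0$ on $S$), expands $\tilde f = T^{1/2}f$ in the $\LL_S$-eigenbasis, and computes each coefficient via the chain $\lambda_i a_i = \langle \LL_S\phi_i,\tilde f\rangle = \langle\phi_i,\LL(\tilde f-\tilde f_0)\rangle = \langle T^{1/2}\phi_i, -(\Delta f_0)_S\rangle$, which produces the boundary-flux sum term by term. Your homogenization step is exactly the identity $g = -(\Delta f_0)_S$ in disguise, so the two arguments share the same algebraic core (the $T^{\pm1/2}$ conjugation and the edge sum across the $S$--$\delta S$ cut), but yours packages it as ``reduce to a Poisson problem with zero boundary data and apply $G = T^{-1/2}\GG T^{1/2}$ with the spectral expansion~\eqref{greensonLLs}.'' What your version buys is conceptual economy: it reuses the Green's function machinery of \S\ref{greensfunctions} wholesale and effectively unifies this lemma with the subsequent Poisson lemma (where the paper decomposes $f = f_1 + f_2$ with $f_2 = Gg$), at the cost of having to justify that $(\Delta f)(z)$ for $z\in S$ splits cleanly into $(\Delta_S f|_S)(z)$ plus the boundary flux --- which holds because every neighbor of a vertex in $S$ lies in $S\cup\delta S$. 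What the paper's coefficient-by-coefficient derivation buys is that it generalizes directly to the time-dependent setting of Lemma~\ref{discreteheatsolution}, where the same $\lambda_i a_i(t)$ expansion becomes an ODE and a one-shot matrix inversion is no longer available. Your two flagged verification points (tracking the $d^{\pm1/2}$ factors, and $\lambda_1>0$ guaranteeing invertibility of $\LL_S$) are precisely the right ones; note also that, like the paper's own expansion of $\Delta f_0$, your final formula matches the statement only in the unweighted case $w\equiv 1$, which you correctly observe.
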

\begin{proof}
For convenience, as in~\cite{chung00} we denote $\tilde{f}=T^{1/2}f$ and note that $\Delta f=0$ if and only if $\LL_S\tilde{f}=0$.  Thus, it is straightforward to translate between solutions of Laplace's equation with the symmetrized Laplacian $\LL$ and those using the discrete Laplacian $\Delta$.

Since $\tilde{f}$ satisfies an equation in $\LL_S$, it makes sense to expand it in the eigenfunctions $\{\phi_i\}$ as $\tilde{f}=\sum_i a_i\phi_i$, where by orthogonality of the $\phi_i$'s we have $a_i=\langle\phi_i,\tilde{f}\rangle$.  For convenience, the boundary conditions can be encoded by a function $f_0(x)$ given by $\sigma(x)$ on $\delta S$ and $0$ otherwise; we will use subscripts of $S$ to denote a function restricted to $S$.

With the notation in place, it is possible to perform considerable simplification:
\begin{align*}
\lambda_ia_i
&= \lambda_i\langle\phi_i,\tilde{f}\rangle\textrm{ since }a_i=\langle\phi_i, f\rangle\\
&= \langle \LL_S\phi_i,\tilde{f}\rangle\textrm{ since }\LL_S\phi_i=\lambda_i\phi_i\\
&=\langle\LL_S\phi_i,(\tilde{f}-\tilde{f}_0)\rangle\textrm{ since }f_0=0\textrm{ on }S\\
&=\langle\phi_i,T^{-1/2}LT^{-1/2}(\tilde{f}-\tilde{f}_0)\rangle\textrm{ since }\LL=T^{-1/2}LT^{-1/2}\textrm{ and }\LL\textrm{ is symmetric}\\
&=\langle\phi_i,T^{-1/2}L(f-f_0)_S\rangle\textrm{ by definition of }\tilde{f}\\
&=\langle T^{1/2}\phi_i,\Delta(f-f_0)_S\rangle\textrm{ since }\Delta=T^{-1}L\textrm{ and }T\textrm{ is symmetric}\\
&=\langle T^{1/2}\phi_i,-(\Delta f_0)_S\rangle\textrm{ since }\Delta f=0\\
&=-\sum_{x\in S}\left(\sqrt{d_x}\phi_i(x)\cdot\frac{1}{d_x}\sum_{(x,y)\in E} (f_0(x)-f_0(y))\right)\textrm{ by expanding the inner product}\\
&=\sum_{x\in S}\sum_{\substack{(x,y)\in E\\y\in\delta S}}d_x^{-1/2}\phi_i(x)\sigma(y)\textrm{ since }f_0=0\textrm{ in }S\textrm{ and }f_0(x)=\sigma(x)\textrm{ on }\delta S
\end{align*}
We divide both sides by $\lambda_i$ and substitute into the eigenfunction expansion of $\tilde{f}$:
\begin{equation}
\tilde{f}=\sum_i a_i\phi_i=\sum_i\left(\frac{1}{\lambda_i}\sum_{\substack{(x,y)\in E\\y\in\delta S}}d_x^{-1/2}\phi_i(x)\sigma(y)\right)\phi_i
\end{equation}
Finally, we apply the relationship $\tilde{f}=T^{1/2}f$:
{\allowdisplaybreaks\begin{align}
f_S(z)&=\sum_i\left(\frac{1}{\lambda_i}\sum_{\substack{(x,y)\in E\\y\in\delta S}}d_x^{-1/2}\phi_i(x)\sigma(y)\right)d_z^{-1/2}\phi_i(z)\\
&=d_z^{-1/2}
\sum_i\frac{1}{\lambda_i}\phi_i(z)
\sum_{\substack{x\in S\\(x,y)\in E\\y\in\delta S}}
d_x^{-1/2}\phi_i(x)\sigma(y)
\end{align}}
This final refactorization completes the proof.
\end{proof}

\begin{lemma}[\cite{chung00}, Theorem 2]
Suppose $\Delta f=g$ on $S$ with $f|_{\delta S}=\sigma$, where $S$ is a connected subgraph.  Then, $f$ can be decomposed as a sum $f_1+f_2$, where $\Delta f_1=0$ with $f|_{\delta S}=\sigma$ and $f_2=Gg$ for Green's function $G$ of $\Delta$.
\end{lemma}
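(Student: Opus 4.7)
The plan is to exploit linearity of $\Delta$: split off the boundary data into a harmonic piece and the source term into a Green's-function piece, verify each satisfies the advertised properties on $S$ and $\delta S$, and then invoke uniqueness to conclude that this decomposition actually equals $f$.

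First I would \emph{define} $f_1$ to be the function produced by the previous lemma, i.e., the unique harmonic extension of $\sigma$ on $\delta S$ into $S$ with $\Delta f_1 = 0$ on $S$. Existence and the explicit formula are given by Lemma~\ref{greensfunctionform}; uniqueness follows because the operator $\LL_S$ (equivalently, $\Delta$ restricted to functions vanishing on $\delta S$) is invertible, a fact already observed in \S\ref{greensfunctions} via the variational characterization of $\lambda_1$ and the hypothesis that $S$ is connected with nonempty boundary. Next, I would set $f_2 = Gg$, where $G = T^{-1/2}\GG T^{1/2}$ is the Green's function built in \S\ref{greensfunctions} from the inverse $\GG$ of $\LL_S$, with $\GG$ extended to $V\setminus S$ by zeros. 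The two key properties to record are: (i) by construction $G\Delta = I$ on $S$, so $\Delta f_2 = \Delta G g = g$ on $S$; and (ii) because $\GG$ is zero outside $S$, we have $f_2 \equiv 0$ on $V\setminus S$, and in particular $f_2|_{\delta S} = 0$.

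With these two building blocks, the verification that $f_1 + f_2$ satisfies the hypotheses reduces to linearity. On $S$, $\Delta(f_1+f_2) = \Delta f_1 + \Delta f_2 = 0 + g = g$; on $\delta S$, $(f_1+f_2)|_{\delta S} = \sigma + 0 = \sigma$. Thus $f_1 + f_2$ solves the same Poisson problem with the same boundary data as $f$. To conclude that $f = f_1+f_2$ (not merely that such a decomposition exists), I would appeal to uniqueness of the Poisson problem on $S$ with Dirichlet data on $\delta S$: the difference $(f_1+f_2) - f$ is in the kernel of $\Delta$ on $S$ and vanishes on $\delta S$, and since $\LL_S$ is positive definite this difference must be identically zero.

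The main obstacle, and the only subtlety beyond pure bookkeeping, is being careful about where $G$ is and is not an inverse. In particular, one needs $\Delta$ and $G$ to play nicely on the combined set $S \cup \delta S$: the identity $\Delta G g = g$ holds \emph{on} $S$, but $\Delta(Gg)$ may be nonzero on $\delta S$, which is harmless here because we only impose the PDE on $S$ and prescribe $f$ pointwise on $\delta S$. The connectedness assumption on $S$ enters precisely to guarantee $\lambda_1(\LL_S) > 0$, hence invertibility and uniqueness; without it the decomposition would only be unique up to a harmonic summand supported on disconnected interior components with empty boundary.
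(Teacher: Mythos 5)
Your proof is correct, but it takes a genuinely different (and longer) route than the paper's. The paper's entire argument is: set $f_2 = Gg$, \emph{define} $f_1 = f - f_2$, and observe $\Delta f_1 = \Delta f - \Delta f_2 = g - g = 0$ on $S$; the decomposition $f = f_1 + f_2$ then holds by construction, and no uniqueness theorem is needed because $f_1$ is manufactured directly from the given $f$. You instead build $f_1$ independently as the harmonic extension of $\sigma$ via Lemma~\ref{greensfunctionform}, verify that $f_1 + f_2$ solves the same Poisson problem with the same Dirichlet data, and then invoke uniqueness (positivity of $\lambda_1(\LL_S)$ under connectedness of $S$ with $\delta S \neq \emptyset$) to conclude $f = f_1 + f_2$. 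What your version buys: it simultaneously proves \emph{existence} of a solution to the Poisson problem (the paper assumes $f$ is given), and it is the only one of the two that actually checks the boundary condition $f_1|_{\delta S} = \sigma$, which the paper's proof silently relies on via $f_2|_{\delta S} = 0$ (and which the statement itself garbles by writing $f|_{\delta S}=\sigma$ where $f_1|_{\delta S}=\sigma$ is meant). What the paper's version buys: it is a two-line subtraction argument requiring no appeal to invertibility beyond the already-established identity $G\Delta = I$ on $S$. Your observation that $\Delta(Gg)$ need not equal $g$ on $\delta S$, and that this is harmless, is a real subtlety that the paper does not address.
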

\begin{proof}
Take $f_1=f-f_2$.  Then, $f=f_1+f_2$ and $\Delta f_1=\Delta f - \Delta f_2=g - g = 0$, as desired.
\end{proof}

With this characterization of Poisson equation solutions, it is straightforward to write closed-form Green's functions and solutions to Laplace and Poisson equations on graphs with regular structures~\cite{chung00}.  Predictably, Green's functions of grid-structured graphs nearly coincide with the usual Discrete Fourier Transform basis.

\subsection{Green's Functions without Boundaries}

Our construction depended on the fact that the set $S$ of vertices on which we computed Green's functions satisfied $\delta S\neq\emptyset$.  When $\delta S=0$, $L=L_S$ no longer is full-rank, so we must deal with this case using some specialized constructions.

It is easy to see that the vector $\phi_0$ given by $\phi_0(k)=\sqrt{\nicefrac{d_k}{\sum_i d_i}}$ comprises the null space of $\LL$.  Since $\LL$ thus projects $\phi_0$ out of functions on the graph, we simply exclude it when constructing the Green's function $\GG$:
\begin{equation}
\GG\LL=\LL\GG=I-\phi_0\phi_0^\top
\end{equation}
Furthermore we constrain $\GG\phi_0\phi_0^\top=0$.  Since the matrix $\phi_0\phi_0^\top$ projects onto unit vector $\phi_0$, these constraints simply state that $\GG$ should project out the $\phi_0$ component and act as an inverse to $\LL$ on the subspace orthogonal to $\phi_0$.

With such a lack of boundary conditions, it is only possible to solve Poisson-type equations up to constant shift by $\phi_0$.  Parabolic equations such as the heat equation remain largely unchanged since they mostly depend on initial conditions.  Other problems including many involving Markov chains are solved exactly by projecting out $\phi_0$~\cite{aldous02}; construction of such problems is out of scope of the discussion at hand.

\subsection{Relationship to Geometric Graph Realization}

The development in this section relied fundamentally on the collection of discrete Laplacians defined in \S\ref{graphlaplacians}.  These Laplacians satisfy the properties stated in \S\ref{laplacianproperties}, and thus generic proofs that depend only on ``Laplacian-like'' structures apply equally well to these Laplacians as to the geometric Laplacians introduced in \S\ref{geometricgraphlaplacian}.

To align completely with continuous intuition, however, ideally one might expect discrete Green's functions and discrete Laplacian eigenfunctions to coincide \emph{exactly} with their counterparts on the geometric graph realization $\G$ restricted to graph vertices.  Such optimism likely would remain unsatisfied in the most general case:  it seems unlikely to be able to find solutions to continuous PDEs like the geometric graph wave equation knowing and making use only of solution values at discrete locations.  Even so, we can prove that in certain cases the geometric and discrete graph Laplacians have more structure in common than the most basic properties of Laplacian operators.

Recall Lemma~\ref{adjacencylemma}, which characterizes geometric graph eigenvalues in the unit edge length case.  These geometric eigenvalues were given by $\{\arccos(\lambda)+2\pi\Z^+,2\pi-\arccos(\lambda)+2\pi\Z^+:\lambda\in\Lambda(\tilde{A})\}$, where $\lambda$ represents an eigenvalue of the normalized adjacency matrix $\tilde{A}$.  Writing $\tilde{A}$ element-by-element yields
\begin{equation}
\tilde{A}(u,v)=
\left\{
\begin{array}{ll}
\frac{1}{\degree u} & \textrm{ if $u$ and $v$ are adjacent}\\
0 & \textrm{ otherwise}
\end{array}
\right.
\end{equation}
This matrix has exactly the same structure as the Markov matrix $P$ from \S\ref{graphlaplacians} when the probability of a transition from a vertex to any of its neighbors is uniform.  Eigenfunctions of $P$ and $I-P$ trivially are identical, and thus we can see that \textbf{in this case eigenfunctions of the discrete Laplacian $\Delta$ coincide with geometric Laplacian eigenfunctions restricted to $V$}.  Corresponding eigenvalues are related using the $\arccos\lambda$ nonlinearity, although it is important to remember that $\G$ has an infinite spectrum since there are multiple frequencies within a given edge yielding the same boundary values.

Thus, in the equi-length case the discrete Laplacian actually can be used to construct geometric PDE solutions.  As edge lengths become nonuniform, however, this relationship breaks down, mostly due to the nonconstant denominator in~\eqref{zerovertexcondition}.

\section{Semi-Discrete PDEs on Graphs}

In \S\ref{discreteoperators}, we defined discrete Laplacian operators for per-vertex functions on graphs.  These operators have key properties in common with Laplacians encountered in classical analysis and admit structures like Green's functions paralleling objects on $\R^n$ and on manifolds.  Given this relationship, it is a reasonable step to consider using \emph{discrete} Laplacians to model flows on graphs.  Intuitively, such models yield flows that are spatially discrete along the domain $\Omega$ but continuous in time $t$.

This construction might be considered a ``semi-discrete PDE,'' in the sense that the time variable remains continuous.  An alternative viewpoint is that these equations become $\R^{|V|}$-valued systems of \emph{ordinary} differential equations (ODEs) with a single independent variable $t$.  Both of these dual viewpoints are valuable:
\begin{itemize}
\item The PDE standpoint indicates that resulting solutions likely will have structures resembling heat flow, wave propagation, and other physical effects modeled using parabolic and hyperbolic equations.
\item The mathematics of ODEs provides straightforward characterizations of existence, uniqueness, and properties of solutions that are stronger than their PDE equivalents.  For example, the Picard-Lindel\"of Existence and Uniqueness Thorem for ODEs automatically guarantees that discrete graph heat and wave equations have solutions for all time $t\geq0$, effectively sidestepping the need for more specialized existence and uniqueness proofs needed for many PDEs~\cite{hirsch12}.
\end{itemize}

\cite{chung07} exhaustively lists solution techniques for first- and second-order semi-discrete PDEs on graphs including low-order driving terms and other special cases.  Here, we will parallel their development in the homogeneous case, characterizing solutions to the model equations of \S\ref{basicanalysis}.  Analogous techniques apply to the more general case, although notation and technicalities become considerably more complex.

\subsection{$\omega$-PDEs on Graphs}

We will continue to use the same notation as previous sections but briefly connect it to the notation used in~\cite{chung07}.  There, graph edges are associated with weights $\omega:V\times V\rightarrow[0,\infty)$; these are identical to the weights $w_{(x,y)}$ defined earlier, taking $w_{(x,y)}=0$ when $(x,y)\notin E$.  The prefix $\omega$ is added to PDE terminology to disambiguate from the continuous case.  Vertex degrees $d_\omega u$ are the same as our $d_u$, and integration of $f:V\rightarrow\R$ is given by $$\int_G f\ d_\omega\equiv \sum_{u\in V} f(u) d_{u}=\int fd_u\ d\V$$

In contrast to~\cite{friedman04,friedman04a}, \cite{chung07} develops a discrete calculus on graphs without constructing a geometric realization.  They define a directional derivative $[D_v f](u)$ of $f:V\rightarrow\R$ by writing differences
\begin{equation}
[D_v f](u)\equiv (f(v)-f(u))\sqrt{\frac{w_{(u,v)}}{d_u}}
\end{equation}
The gradient $\nabla f:V\rightarrow\R^{|V|}$ at $u$ is defined as a vector of $D_v f$ values evaluated at $u$:
\begin{equation}
(\nabla f)(u)\equiv (D_v f(u))_{v\in V}
\end{equation}
Second-order PDEs are modeled using the discrete Laplacian $\Delta$ introduced in \S\ref{graphlaplacians}.\footnote{The Laplacian used here differs from that in~\cite{chung07} by a sign.}  The definition of directional differentiation above is consistent with this Laplacian, in the sense that the Laplacian looks like summing double applications of $D_v$.  This self-consistent system admits analogs of many integration-by-parts and related vector calculus identities for discrete graph operators~\cite{chung05}.

\subsection{$\omega$-Diffusion Equations}

Although our definition of the Laplacian has changed, the form of the homogeneous $\omega$-diffusion or heat equation on discrete graphs remains the same: $$u_t=-\Delta u,$$ although now we have $u\in\R^{|V|}$ a signal only on vertices rather than $\G$.  Expanding the Laplacian row-by-row demonstrates that this equation models flows between nodes where conductivity is measured using $w$ and the instantaneous flow rate is determined by the difference in value of $u$ between adjacent nodes; such models appear in discretized PDEs as well as in models of electrical networks.  Notice that the proof of Lemma~\ref{weakmaximumprinciple} applies for this heat equation, implying not only existence and uniqueness but also a characterization of the continuum of values appearing in $u$.

Employing the notation of \S\ref{greensfunctions}, we will assume that a subset $S\subseteq V$ is specified on which we desire to solve the heat equation and that boundary values are prescribed on $\delta S$.  To do so, we once again consider the normalized Laplacian $\LL=T^{1/2}\Delta T^{-1/2}$ restricted to $S$, yielding operator $\LL_S$ with eigenvalues $0<\lambda_1\leq \lambda_2\leq\cdots\leq\lambda_{|S|}$ and eigenfunctions $\phi_1,\phi_2,\ldots,\phi_{|S|}$.  Since $\LL_S$ is positive definite, we can assume that the $\phi_i$'s are orthogonal and unit length.

We can use intuition from the construction of~\eqref{greensonLLs} and the fact that $e^{-\lambda_it}\phi_i$ satisfies the heat equation $\forall i$ to construct a heat kernel $K_S$:
\begin{equation}
K_S(u,v,t)=\sum_{i=1}^{|S|} e^{-\lambda_it}\phi_i(u)\phi_i(v)\sqrt{\frac{d_v}{d_u}}
\end{equation}
This kernel provides a succinct description of solutions to the graph heat equation:

\begin{lemma}[\cite{chung07}, Theorem 3.4]\label{discreteheatsolution}
Suppose $\delta S\neq\emptyset$.  Take functions $\sigma:\delta S\times[0,T)\rightarrow\R$ and $f:S\rightarrow\R$ such that $\sigma$ is continuous and $L^1$ in time $t$ for each vertex.  Then, the solution of the heat equation $u_t=-\Delta u$ with $u|_{t=0}=f$ and $u|_{\delta S}=\sigma\ \forall t\in[0,T)$ is given by
\begin{equation}
u(x,t) = \langle K_S(x,\cdot,t), f(\cdot)\rangle_S + \int_0^t \langle K_S(x,\cdot,t-\tau), B_\sigma(\cdot,\tau)\rangle\ d\tau
\end{equation}
where inner products are over graph vertices summed over ``$\cdot$'' and
\begin{equation}
B_\sigma(y,t)=\sum_{\substack{z\in\delta S\\(y,z)\in E}}\frac{\sigma(z,t)w_{(y,z)}}{d_y}
\end{equation}
\end{lemma}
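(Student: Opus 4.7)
The plan is to reduce the boundary-value problem on $S\cup\delta S$ to an inhomogeneous initial-value ODE on $S$ alone, then apply Duhamel's principle and identify the resulting semigroup $e^{-t\Delta_S}$ with the kernel $K_S$ by passing through the symmetric $\LL_S$.

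The first step is to restrict the heat equation to $S$. Writing $[\Delta u](y)$ for $y\in S$ and splitting the sum $\sum_{(y,z)\in E}$ according to whether $z\in S$ or $z\in\delta S$, one obtains
\begin{equation*}
[\Delta u](y) \;=\; [\Delta_S (u|_S)](y) \;-\; \sum_{\substack{z\in\delta S\\(y,z)\in E}}\frac{w_{(y,z)}}{d_y}\sigma(z,t) \;=\; [\Delta_S (u|_S)](y) - B_\sigma(y,t),
\end{equation*}
using that the diagonal entry $d_y$ of $\Delta_S$ coincides with that of the full $\Delta$. Thus $u|_S$ satisfies the inhomogeneous linear ODE $\tfrac{d}{dt}u|_S = -\Delta_S u|_S + B_\sigma(\cdot,t)$ with initial datum $f$, and Duhamel's formula gives
\begin{equation*}
u(\cdot,t)|_S \;=\; e^{-t\Delta_S}f \;+\; \int_0^t e^{-(t-\tau)\Delta_S}B_\sigma(\cdot,\tau)\,d\tau.
\end{equation*}

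The second step is to rewrite this semigroup as a sum against $K_S$. Since $\LL_S = T_S^{1/2}\Delta_S T_S^{-1/2}$ is symmetric and positive definite (boundary nonempty, so $\lambda_1>0$ as in \S\ref{greensfunctions}), we have $e^{-t\Delta_S} = T_S^{-1/2}e^{-t\LL_S}T_S^{1/2}$, and the eigendecomposition of $\LL_S$ yields $[e^{-t\LL_S}](x,y) = \sum_i e^{-\lambda_i t}\phi_i(x)\phi_i(y)$. Multiplying by $d_x^{-1/2}$ on the left and $d_y^{1/2}$ on the right exactly produces the factor $\sqrt{d_y/d_x}$ appearing in the definition of $K_S$, so $[e^{-t\Delta_S}g](x) = \langle K_S(x,\cdot,t),g(\cdot)\rangle_S$. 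Substituting this identity into the Duhamel expression and extending trivially by $\sigma$ on $\delta S$ reproduces the claimed formula; the initial condition $u|_{t=0}=f$ is immediate from $K_S(x,y,0) = \delta_{xy}\sqrt{d_y/d_x}$ together with the vanishing of the time integral at $t=0$.

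The main obstacles are bookkeeping rather than deep: one is the sign/similarity juggle needed to pass between $\Delta_S$ and $\LL_S$ so that the $\sqrt{d_v/d_u}$ factor in $K_S$ appears in exactly the right place, and the other is verifying that the forcing term $-\Delta_{S,\delta S}\sigma$ extracted from the off-diagonal block matches $B_\sigma$ as defined. The integrability hypothesis on $\sigma$ in $t$ is exactly what is required to apply the standard Duhamel formula; continuity in $t$ then ensures that the boundary trace $u|_{\delta S}$ attains $\sigma$ pointwise, completing the verification. Uniqueness follows from the maximum principle (Lemma~\ref{weakmaximumprinciple}), applied as in \S\ref{basicanalysis} to the difference of two candidate solutions.
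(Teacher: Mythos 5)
Your proposal is correct, but it reaches the formula by a genuinely different route than the paper. The paper never forms the vector-valued ODE on $S$: instead it expands $T^{1/2}u|_S=\sum_i a_i(t)\phi_i$ in the eigenbasis of $\LL_S$ and derives, for each coefficient, a \emph{scalar} inhomogeneous ODE $\lambda_i a_i=-a_i'+(\text{boundary term})$ via a chain of inner-product manipulations --- moving $\LL$ across the inner product over $\bar S=S\cup\delta S$, peeling off the $\delta S$ contribution, and computing $\LL\phi_i$ explicitly at boundary vertices to identify the forcing. Each scalar ODE is then solved by an integrating factor and the modes are reassembled into $K_S$. You instead split $\Delta$ restricted to $S$ into the principal submatrix $\Delta_S$ plus the off-diagonal coupling to $\delta S$ (which becomes $B_\sigma$ directly, since the degrees $d_y$ in the normalization are still those of the full graph), apply the matrix Duhamel formula, and only at the end diagonalize $e^{-t\Delta_S}=T_S^{-1/2}e^{-t\LL_S}T_S^{1/2}$ to recognize $K_S$. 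The two arguments are the same variation-of-parameters computation, yours vectorized and the paper's mode-by-mode; yours is shorter and makes the semigroup structure transparent, while the paper's makes explicit where the boundary forcing arises from a discrete Green's identity and mirrors classical separation of variables. Two cosmetic remarks: your phrase ``the diagonal entry $d_y$ of $\Delta_S$'' is imprecise (the diagonal of $\Delta=T^{-1}L$ is $1$; the relevant point, which you use correctly, is that the normalizing degree $d_y$ in the submatrix is still the full-graph degree), and your appeal to the maximum principle for uniqueness is redundant, since the Duhamel derivation already shows any solution must take the stated form --- which is exactly how the paper closes its proof as well.
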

\begin{proof}
Suppose $u$ is such a solution.  By completeness of the basis $\{\phi_i\}_i$, we can choose functions $a_i(t)=\langle T^{1/2}u,\phi_i\rangle$ satisfying
\begin{equation}\label{eigenexpansion}
T_S^{1/2}u|_S=\sum_i a_i(t)\phi_i(x)
\end{equation}
since each $\phi_i$ satisfies the zero boundary condition.

Since $\LL=T^{1/2}\Delta T^{-1/2}$, we can write $\LL T^{1/2}=T^{1/2}\Delta$.  Thus, in the style of the proof of Lemma~\ref{greensfunctionform}, we expand the inner product for $a_i$:
\begin{align*}
\lambda_ia_i(t)
&=\langle T^{1/2} u(\cdot,t), \lambda_i\phi_i\rangle_S\textrm{ by definition of }a_i\\
&=\langle T^{1/2} u(\cdot,t), \LL_S\phi_i\rangle_S\textrm{ since $\phi_i$ is an eigenfunction of $\LL_S$}\\
&=\langle T^{1/2} u(\cdot,t), \LL \phi_i\rangle_{\bar{S}} - \langle T^{1/2} u(\cdot,t),\LL\phi_i\rangle_{\delta S}\textrm{ where we define }\bar{S}=S\cup\delta S\\
&=\langle \LL T^{1/2} u(\cdot,t), \phi_i\rangle_{\bar{S}} - \langle T^{1/2} u(\cdot,t),\LL\phi_i\rangle_{\delta S}\textrm{ by symmetry of }\LL\\
&=\langle T^{1/2}\Delta u(\cdot,t), \phi_i\rangle_{\bar{S}} - \langle T^{1/2} u(\cdot,t),\LL\phi_i\rangle_{\delta S}\textrm{ since }\LL=T^{1/2}\Delta T^{-1/2}\\
&=-a_i'(t)-\langle T^{1/2} u(\cdot,t),\LL\phi_i\rangle_{\delta S}\textrm{ since $u$ satisfies the heat equation}\\
&=-a_i'(t)-\langle T^{1/2} \sigma(\cdot,t),\LL\phi_i\rangle_{\delta S}\textrm{ by boundary conditions on $u$}
\end{align*}
If we expand the definition of $\Delta$ row-by-row, it is easy to see that $$\Delta f(x)=\sum_{(x,y)\in E}(f(x)-f(y))\frac{w_{(x,y)}}{d_x}.$$   This implies the following form for applications of $\LL$:
\begin{align*}
\LL f(x) &= [T^{1/2}\Delta T^{-1/2}] f(x)\textrm{ by definition of }\LL\\
&=\sqrt{d_x}\sum_{(x,y)\in E} \left(\frac{f(x)}{\sqrt{d_x}}-\frac{f(y)}{\sqrt{d_y}}\right)\frac{w_{(x,y)}}{d_x}\textrm{ since we must apply powers of }T\\
&= \sum_{(x,y)\in E} \left(\frac{f(x)}{d_x} - \frac{f(y)}{\sqrt{d_xd_y}} \right)w_{(x,y)}
\end{align*}
In particular, taking $f=\phi_i$ and evaluating on the boundary $x\in\delta S$ yields
\begin{equation}
\LL \phi_i(x) = - \sum_{(x,y)\in E} \frac{\phi_i(y)w_{(x,y)}}{\sqrt{d_xd_y}}
\end{equation}
since $\phi_i$ satisfies the zero Dirichlet boundary conditions.

Thus, returning to our earlier chain of equalities for $\lambda_ia_i(t)$ shows
\begin{equation}
\lambda_ia_i(t) = -a_i'(t)+\sum_{\substack{z\in\delta S\\(y,z)\in E}}\frac{\sigma(z,t)\phi_i(y)w_{(y,z)}}{\sqrt{d_y}}
\end{equation}
This is an inhomogeneous first-order ordinary differential equation in $t$ whose solution is:
\begin{equation}\label{inhomo}
a_i(t) = c_i e^{-\lambda_i t} + e^{-\lambda_i t}\sum_{\substack{z\in\delta S\\(y,z)\in E}} \left(\int_0^t \sigma(z,\tau)e^{\lambda_i\tau}\ d\tau \right)\frac{\phi_i(y)w_{(y,z)}}{\sqrt{d_y}}
\end{equation}
for constants $\{c_i\}_{i=1,\ldots,|S|}\subset\R$.  Note $a_i(0)=c_i$ in this characterization.

To choose the constants $c_i$, we return to the definition of $a_i(0)$ and evaluate at $t=0$:
\begin{equation}\label{cexp}
c_i =\langle T^{1/2}u(\cdot,0),\phi_i\rangle\equiv \langle T^{1/2}f,\phi_i\rangle_S
\end{equation}

Finally, we substitute this expression into~\eqref{eigenexpansion} to show:
\begin{align*}
u(x,t)&= \frac{1}{\sqrt{d_x}}\sum_i a_i(t)\phi_i(x)\\
&=\frac{1}{\sqrt{d_x}}\sum_i \left[  c_i e^{-\lambda_i t} + e^{-\lambda_i t}\sum_{\substack{z\in\delta S\\(y,z)\in E}} \left(\int_0^t \sigma(z,\tau)e^{\lambda_i\tau}\ d\tau \right)\frac{\phi_i(y)w_{(y,z)}}{\sqrt{d_y}} \right]\phi_i(x)\textrm{ by~\eqref{inhomo}}\\
&=\frac{1}{\sqrt{d_x}}\sum_i \left[  \langle T^{1/2}f,\phi_i \rangle_S e^{-\lambda_i t} + e^{-\lambda_i t}\sum_{\substack{z\in\delta S\\(y,z)\in E}} \left(\int_0^t \sigma(z,\tau)e^{\lambda_i\tau}\ d\tau \right)\frac{\phi_i(y)w_{(y,z)}}{\sqrt{d_y}} \right]\phi_i(x)\textrm{ by~\eqref{cexp}}\\
&=\sum_{y\in S}K_S(x,y,t)f(y) + \int_0^t\left(\sum_{\substack{z\in\delta S\\(y,z)\in E}}K_S(x,y,t-\tau)\frac{\sigma(z,\tau)w_{(y,z)}}{d_y}\right)\ d\tau\textrm{ by definition of }K_S\\
&=\sum_{y\in S}K(x,y,t)f(y) + \int_0^t\langle K_S(x,\cdot,t-\tau), B_\sigma(\cdot,\tau)\rangle\ d\tau\textrm{, as desired.}
\end{align*}
This argument shows that if a solution exists to the heat equation then it must take this form.  It is easy to check that this expression does indeed satisfy the heat equation; in fact, this would comprise a short but somewhat obtuse proof of the lemma.  Nevertheless, it thus represents the unique solution to the heat equation with given boundary and initial conditions.
\end{proof}

The proof of Lemma~\ref{discreteheatsolution} is fairly technical, but from a high level it demonstrates the dual PDE/ODE approach to understanding flows on graphs.  ODE theorems are used to guarantee existence and simplify convergence since sums are finite.  The construction of the solution, however, relies on a separation of variables technique reminiscent of that used to solve the heat equation on subsets of $\R^n$.

\subsection{$\omega$-Elastic Equations}

As with the heat equation, the homogeneous $\omega$-elastic equation on discrete graphs takes the usual form $$u_tt=-\Delta u,$$ for signals $u\in\R^{|V|}$.  Note that this equation when viewed as an ODE models the behavior of a network of linked springs connecting vertices on a graph.

It is straightforward if verbose to parallel the proof of Lemma~\ref{discreteheatsolution} to construct closed-form solutions to the $\omega$-elastic equation with given boundary and initial conditions.  To do so, we will construct an additional Dirichlet kernel for this contrasting problem:
\begin{equation}
W_S(x,y,t) = \left[ t\phi_0(x)\phi_0(y) + \sum_i \frac{1}{\sqrt{\lambda_i}}\sin(\sqrt{\lambda_i}t)\phi_i(x)\phi_i(y)\right]\sqrt{\frac{d_y}{d_x}}
\end{equation}

Then, solutions are provided by the following lemma:
\begin{lemma}[\cite{chung07}, Theorem 4.3]\label{discretewavesolution}
Suppose $\delta S\neq\emptyset$.  Take functions $\sigma:\delta S\times[0,T)\rightarrow\R$ and $f,g:S\rightarrow\R$ such that $\sigma$ is continuous and $L^1$ in time $t$ for each vertex.  Then, the solution of the elastic equation $u_t=-\Delta u$ with $u|_{t=0}=f$, $u_t|_{t=0}=g$, and $u|_{\delta S}=\sigma\ \forall t\in[0,T)$ is given by
\begin{equation}
u(x,t) = \langle W_S(x,\cdot,t),g\rangle_S + \left\langle\frac{\partial}{\partial t}W_S(x,\cdot,t),f\right\rangle_S + \int_0^t \langle W_S(x,\cdot,t-\tau), B_\sigma(\cdot,\tau)\rangle_S\ d\tau
\end{equation}
\end{lemma}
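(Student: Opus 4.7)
The plan is to follow the proof of Lemma~\ref{discreteheatsolution} almost verbatim, with the single substantive change being that $u_{tt}=-\Delta u$ is second-order in time and therefore yields a second-order ODE for each eigenbasis coefficient, whose Duhamel representation uses sines rather than exponentials.

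Concretely, I would expand $T_S^{1/2}u|_S = \sum_i a_i(t)\phi_i$ with $a_i(t)=\langle T^{1/2}u(\cdot,t),\phi_i\rangle_S$ and rerun the boundary-splitting manipulation from the heat case: begin with $\lambda_i a_i(t)=\langle T^{1/2}u,\LL_S\phi_i\rangle_S$, extend the inner product to $\bar S = S\cup\delta S$, move $\LL$ across by symmetry, and peel off the $\delta S$ contribution using the explicit form of $\LL\phi_i$ on the boundary. Substituting $\Delta u=-u_{tt}$ inside $S$ and $u|_{\delta S}=\sigma$ on the boundary gives
$$a_i''(t)+\lambda_i a_i(t)=\sum_{\substack{z\in\delta S\\(y,z)\in E}}\frac{\sigma(z,t)\phi_i(y)w_{(y,z)}}{\sqrt{d_y}}\equiv h_i(t),$$
with initial data $a_i(0)=\langle T^{1/2}f,\phi_i\rangle_S$ and $a_i'(0)=\langle T^{1/2}g,\phi_i\rangle_S$ inherited from $u|_{t=0}=f$ and $u_t|_{t=0}=g$.

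Since $\delta S\neq\emptyset$ forces every $\lambda_i>0$, the standard solution of this linear inhomogeneous ODE is
$$a_i(t)=a_i(0)\cos(\sqrt{\lambda_i}\,t)+\frac{a_i'(0)}{\sqrt{\lambda_i}}\sin(\sqrt{\lambda_i}\,t)+\int_0^t\frac{\sin(\sqrt{\lambda_i}(t-\tau))}{\sqrt{\lambda_i}}h_i(\tau)\,d\tau.$$
Substituting into $u(x,t)=d_x^{-1/2}\sum_i a_i(t)\phi_i(x)$ and regrouping by time dependence, the $\sin(\sqrt{\lambda_i}\,t)/\sqrt{\lambda_i}$ coefficients reassemble into $\langle W_S(x,\cdot,t),g\rangle_S$; the $\cos(\sqrt{\lambda_i}\,t)$ coefficients into $\langle \partial_t W_S(x,\cdot,t),f\rangle_S$, using that differentiating $W_S$ in $t$ converts $\sin(\sqrt{\lambda_i}\,t)/\sqrt{\lambda_i}$ into $\cos(\sqrt{\lambda_i}\,t)$; and the Duhamel integral, after inserting the explicit form of $h_i$ and absorbing $\sqrt{d_y/d_x}$ factors into the kernel normalization, becomes $\int_0^t\langle W_S(x,\cdot,t-\tau),B_\sigma(\cdot,\tau)\rangle_S\,d\tau$. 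The residual $t\phi_0(x)\phi_0(y)$ term in $W_S$ corresponds to a would-be $\lambda_0=0$ mode (where the ODE $a_0''=h_0$ has a linearly growing homogeneous solution); it is inert in the present setting because $\LL_S$ is strictly positive definite, but it is included in the definition so that the same formula covers the boundaryless regime by analogy with \S\ref{greensfunctions}.

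The main obstacle is purely algebraic bookkeeping---tracking $T^{\pm 1/2}$ rescalings, sign conventions, and $\sqrt{d_y/d_x}$ factors while repackaging the eigenfunction series into the compact kernel form stated in the lemma. Existence and uniqueness follow exactly as in the heat case: Picard-Lindel\"of applied to each scalar ODE for $a_i$ guarantees a unique solution, Duhamel's formula constructs it explicitly, and, as noted in the heat proof, one can alternatively substitute the claimed closed form directly into $u_{tt}=-\Delta u$ with the prescribed initial and boundary data to obtain a short but less transparent verification.
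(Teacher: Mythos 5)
Your proposal is correct and follows exactly the route the paper takes: the paper's own argument is only a brief sketch (eigenexpansion of $T_S^{1/2}u$, a second-order ODE with no first-order term for each $a_i(t)$, then the standard inhomogeneous solution formula), and you have filled in precisely those steps, including the correct Duhamel form $a_i(t)=a_i(0)\cos(\sqrt{\lambda_i}\,t)+\lambda_i^{-1/2}a_i'(0)\sin(\sqrt{\lambda_i}\,t)+\int_0^t\lambda_i^{-1/2}\sin(\sqrt{\lambda_i}(t-\tau))h_i(\tau)\,d\tau$ and the reassembly into $W_S$ and $B_\sigma$. Your observation that the $t\,\phi_0(x)\phi_0(y)$ term in $W_S$ is inert when $\delta S\neq\emptyset$ is a sensible clarification that the paper's sketch does not even address.
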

\begin{proof}[Proof sketch]
Define $a_i(t)$ analogously to~\eqref{eigenexpansion}.  Paralleling the expansion of $\lambda_ia_i(t)$ in Lemma~\ref{discreteheatsolution} yields a \emph{second}-order ODE in $a_i(t)$ as a function of $t$, with no first-order term.  Applying the usual inhomogeneous second-order ODE solution formula and substituting the given boundary conditions yields the desired result.
\end{proof}

\subsection{Analysis and Properties}

The generic proof of the weak maximum principle in Lemma~\ref{weakmaximumprinciple} demonstrates how straightforward it can be to extend standard analytical proofs to the semi-discrete case.  Such structure preservation is a common theme in the treatment of discrete PDEs on graphs.  We provide two additional examples of continuous theorems that naturally extend to the semi-discrete case, one for the heat equation and one for the wave equation.

\begin{lemma}[\cite{chung07}, ``Huygens Property'' Theorem 3.6]
Suppose $u$ satisfies the heat equation $u_t=-\Delta u$ on a graph with $S=V$.  Then, for every $t,\delta>0$ we have
\begin{equation}
u(x,t+\delta)=\langle K(x,\cdot,\delta), u(\cdot,t)\rangle
\end{equation}
\end{lemma}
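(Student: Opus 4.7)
The plan is to exploit the time-translation invariance of the homogeneous heat equation together with uniqueness of its solutions; equivalently, to read off the semigroup identity $e^{-(t+\delta)\LL} = e^{-\delta\LL}\,e^{-t\LL}$ from a spectral expansion and transcribe it into the $K$-notation. By construction, $K(x,y,\delta)$ is essentially the Schwartz kernel of $e^{-\delta\LL}$ (after the $T^{\pm 1/2}$ conjugation), so the Huygens property is the assertion that the flow $u(\cdot,0) \mapsto u(\cdot,t)$ composes correctly under concatenation of time intervals.

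First I would reduce to an initial-value statement. Fix $t > 0$ and define $v(x,\tau) := u(x, t+\tau)$ for $\tau \geq 0$. Because the equation $u_t = -\Delta u$ is autonomous and linear, $v$ also satisfies $v_\tau = -\Delta v$, and has initial data $v(\cdot,0) = u(\cdot,t)$. The Huygens identity is then the claim that $v(x,\delta)$ equals the heat-kernel representation of the unique solution with this initial data.

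Next I would perform the spectral expansion. Since $S = V$ has no boundary, $\LL$ admits a complete orthonormal eigenbasis $\{\phi_i\}_{i=0}^{|V|-1}$ with eigenvalues $0 = \lambda_0 \leq \lambda_1 \leq \cdots$. Write $\sqrt{d_x}\,v(x,\tau) = \sum_i a_i(\tau)\phi_i(x)$. Substituting into the heat equation and using the identity $\LL T^{1/2} = T^{1/2}\Delta$ exactly as in the derivation preceding Lemma~\ref{discreteheatsolution} decouples into the scalar ODEs $a_i'(\tau) = -\lambda_i\,a_i(\tau)$, with initial data $a_i(0) = \langle T^{1/2}u(\cdot,t), \phi_i\rangle$; hence $a_i(\tau) = e^{-\lambda_i \tau} a_i(0)$. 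Setting $\tau = \delta$ and reassembling,
\begin{align*}
u(x, t+\delta)
&= \frac{1}{\sqrt{d_x}}\sum_i e^{-\lambda_i\delta}\phi_i(x)\langle T^{1/2}u(\cdot,t),\phi_i\rangle\\
&= \sum_{y \in V} u(y,t)\sum_i e^{-\lambda_i\delta}\phi_i(x)\phi_i(y)\sqrt{\tfrac{d_y}{d_x}}\\
&= \langle K(x,\cdot,\delta),\, u(\cdot,t)\rangle,
\end{align*}
after swapping the order of summation and recognizing the inner sum as the definition of $K$.

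The main obstacle is a pedantic one: Lemma~\ref{discreteheatsolution} was stated for $\delta S \neq \emptyset$, which forced $\lambda_1 > 0$, whereas the Huygens setting has $S = V$ and so includes the zero eigenvalue $\lambda_0 = 0$ with constant eigenfunction $\phi_0 \propto \sqrt{d_{\bullet}/\sum_k d_k}$. The only adjustment is to retain the $i=0$ term in the expansion; it contributes the factor $e^{-\lambda_0 \delta} = 1$ identically and does not disturb orthonormality, so the separation-of-variables derivation of Lemma~\ref{discreteheatsolution} carries through verbatim, producing the heat kernel with its zero-mode term intact and thereby yielding the identity claimed.
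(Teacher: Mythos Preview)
Your proof is correct and follows essentially the same route as the paper: both arguments expand the solution in the orthonormal eigenbasis of $\LL$, use the exponential form $e^{-\lambda_i t}$ of the time coefficients, and then identify the resulting sum as the heat kernel $K$ paired with $u(\cdot,t)$. Your framing via the time-shifted function $v(x,\tau)=u(x,t+\tau)$ and the semigroup viewpoint is a slight conceptual gloss on the paper's more direct computation, but the algebraic content is the same.
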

\begin{proof}
It is easy to check that the solution $u$ in this case satisfies
\begin{equation}\label{noboundarysoln}
u(x,t) = \frac{1}{\sqrt{d_x}}\sum_i c_i e^{-\lambda_it}\phi_i(x)
\end{equation}
for constants $\{c_i\}_{i=1,\ldots,|V|}$.  Thus,
\begin{align*}
u(x,t+\delta)
&= \frac{1}{\sqrt{d_x}} \sum_i c_ie^{-\lambda_i(t+\delta)}\phi_i(x)\textrm{ by~\eqref{noboundarysoln}}\\
&=\frac{1}{\sqrt{d_x}} \sum_i e^{-\lambda_i\delta}c_ie^{-\lambda_it}\phi_i(x)\\
&=\frac{1}{\sqrt{d_x}} \sum_i e^{-\lambda_i\delta}\phi_i(x)\left\langle \sqrt{d_y}\phi_i(y),\frac{1}{\sqrt{d_y}}\sum_j c_j e^{-\lambda_jt}\phi_j(y) \right\rangle\textrm{ by orthogonality of the $\phi_i$'s}\\
&=\frac{1}{\sqrt{d_x}} \sum_i e^{-\lambda_i\delta}\phi_i(x)\langle \sqrt{d_y}\phi_i(y), u(\cdot,t)\rangle\textrm{ by~\eqref{noboundarysoln} again}\\
&=\langle K(x,\cdot,\delta), u(\cdot,t)\rangle\textrm{, as desired}
\end{align*}
\end{proof}

Our next property provides a discrete analog of~\eqref{geometricwaveenergy}, which is also conserved over time.

\begin{lemma}[\cite{chung07}, Theorem 4.6]\label{energylemma}
Suppose $\delta S\neq\emptyset$ and $u$ satisfies the $\omega$-elastic equation with $u|_{\delta S}\equiv0$.  Then, the following energy function remains constant with respect to $t$:
\begin{equation}
E(t)=\int_{\bar{S}} \left[ u_t^2 + \frac{1}{2}\sum_{\substack{(x,y)\in E,y\in\bar{S}}} (D_yu(x,t))^2 \right]d_x\ d\V_x
\end{equation}
\end{lemma}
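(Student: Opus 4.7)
The plan is to prove $E'(t) = 0$ by direct differentiation under the (finite) sum and then applying a discrete analog of integration by parts to see the two resulting terms cancel. Since $E(t)$ is a finite sum of smooth functions of $t$, differentiation under the integral is unproblematic, giving
\begin{equation*}
E'(t) = \sum_{x\in\bar{S}} d_x \left[ 2 u_t(x,t)\, u_{tt}(x,t) + \sum_{(x,y)\in E,\, y\in\bar{S}} D_y u(x,t)\cdot D_y u_t(x,t) \right].
\end{equation*}
Call these two pieces $S_1$ and $S_2$ respectively. The strategy is to rewrite $S_1$ using the wave equation and to rewrite $S_2$ by expanding $D_y$ in terms of nodal differences, then exhibit them as negatives of one another.

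First I would handle $S_1$. The boundary condition $u|_{\delta S}\equiv 0$ for all $t$ forces $u_t(x,t)=0$ for every $x\in\delta S$, so the $x\in\delta S$ contributions vanish and the wave equation $u_{tt}=-\Delta u$ applies to the surviving $x\in S$ terms. Expanding $\Delta u(x) = \sum_{(x,y)\in E}(u(x)-u(y))w_{(x,y)}/d_x$ yields
\begin{equation*}
S_1 = -2 \sum_{x\in S}\; \sum_{(x,y)\in E} u_t(x,t)(u(x,t)-u(y,t))\, w_{(x,y)}.
\end{equation*}
Since neighbors $y$ of $x\in S$ lie in $\bar{S}$, and since $u_t(x,t)=0$ on $\delta S$, I can harmlessly promote the outer sum to $x\in\bar{S}$. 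Now I apply the key symmetry trick: rename $x\leftrightarrow y$ in a second copy of this expression (the edge relation and $w_{(x,y)}$ are symmetric) and average the two. This produces the standard discrete summation-by-parts formula
\begin{equation*}
S_1 = -\sum_{\substack{x,y\in\bar{S}\\(x,y)\in E}} (u(x,t)-u(y,t))(u_t(x,t)-u_t(y,t))\, w_{(x,y)}.
\end{equation*}

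For $S_2$, I would substitute the definition $D_y u(x,t) = (u(y,t)-u(x,t))\sqrt{w_{(x,y)}/d_x}$ into the product $D_y u \cdot D_y u_t$. The factor $\sqrt{w_{(x,y)}/d_x}^{\,2} = w_{(x,y)}/d_x$ cancels the outer $d_x$, leaving precisely
\begin{equation*}
S_2 = \sum_{\substack{x,y\in\bar{S}\\(x,y)\in E}} (u(y,t)-u(x,t))(u_t(y,t)-u_t(x,t))\, w_{(x,y)},
\end{equation*}
which equals $-S_1$ since the summand is invariant under sign-flipping both factors. Hence $E'(t) = S_1 + S_2 = 0$, so $E(t)$ is constant.

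The main obstacle is bookkeeping rather than mathematical depth: one must carefully track which sums run over $S$ versus $\bar{S}$ when invoking the wave equation (which is given only on $S$), and use the Dirichlet condition $u|_{\delta S}\equiv 0$ in its differentiated form $u_t|_{\delta S}\equiv 0$ to legitimately extend the range of summation so that the symmetry-swap trick applies. Once that extension is justified, the cancellation is the discrete analogue of the classical identity $\int u_t u_{tt} + \int \nabla u\cdot \nabla u_t = -\int u_t \Delta u + \int \nabla u\cdot \nabla u_t = 0$, obtained from divergence-theorem style manipulation.
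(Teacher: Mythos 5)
Your proof is correct and follows essentially the same route as the paper: differentiate $E$ under the discrete integral, then cancel the $u_tu_{tt}$ term against the gradient term via discrete integration by parts and the wave equation, using $u_t|_{\delta S}\equiv 0$ to manage the $S$ versus $\bar{S}$ bookkeeping. The only difference is that the paper invokes the summation-by-parts identity as a black box (citing an external reference), whereas you derive it inline via the $x\leftrightarrow y$ symmetrization trick, which makes your version self-contained; the factor-of-two accounting in both versions agrees.
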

\begin{proof}
We show that the derivative of $E$ with respect to $t$ is zero:
\begin{align*}
E'(t) &= \int_{\bar{S}} \left[2u_tu_{tt}+\sum_y (D_yu\cdot D_yu_t) \right] d_x\ d\V_x\textrm{, differentiating under the (discrete) integral}\\
&= \int_{\bar{S}} \left[2u_tu_{tt}+\nabla u\cdot\nabla u_t \right] d_x\ d\V_x\textrm{ by definition of the discrete gradient}\\
&=2\int_{\bar{S}} u_t(u_{tt}+\Delta u)d_x\ d\V\textrm{ by discrete integration by parts~\cite{chung05}}\\
&=2\int_Su_t(u_{tt}+\Delta u)d_x\ d\V\textrm{ since we specified }u|_{\delta S}\equiv0\\
&=0\textrm{ since $u$ satisfies the wave equation $u_{tt}=-\Delta u$ in $S$}
\end{align*}
Thus, $E$ does not change as $t$ progresses.
\end{proof}
Note that this lemma implies that solutions of the discrete wave equation with prescribed boundary values are unique.  This is because the difference $u_1-u_2$ satisfies the conditions of Lemma~\ref{energylemma} with $E(0)=0$.  Thus, $\nicefrac{\partial}{\partial t}(u_1-u_2)=0$, and they agree at $t=0$.

\section{Conclusion}

The abstract definition of a graph as a collection of vertices and edges admits straightforward visual and physical \emph{intuition} but by design does not have a single application or construction in mind. Such generality is exactly what makes graphs versatile data structures, but at the same time provides limited guidance for the best adaptations of potentially valuable differential techniques.  Thus, there remains little consensus regarding the most effective methods for modeling flows, waves, and other phenomena as they might propagate along a graph.

Key to the development of a PDE theory for functions on graphs is the choice of topological domains and their accompanying operators.  Arguably closest to the classical case, \cite{friedman04} argues that the geometric realization $\G$ is the most natural domain.  Geometrizations have the advantage that they can be realized physically and have differential structure in edge interiors, allowing for the application of classical one-dimensional results with boundary conditions coupling vertex values.  Unfortunately, the transition from the manifold structure of subsets of $\R^1$ to the ``varifold'' structure of a graph presents considerable theoretical challenges:  Even the choice of Laplacian operators is unclear and may require combining two operators like $\Delta_E$ and $\Delta_V$ using integrating factors.  While abstract theorems characterizing graph diameters and other large scale properties can be proven in some cases, much remains unknown about $\G$ and whether it can be used to construct practical algorithms or analytical tools.

More practically implementable is the collection of techniques involving discrete differential operators on functions in $\R^{|V|}$.  These operators typically are expressible using sparse matrices that easily can be computed and manipulated.  In particular, depending on application there are at least three discrete Laplacian operators that might be useful, notated here as $L$, $\Delta$, and $\LL$.  Even more operators with similar structure can arise when dealing with Laplacians from geometry or simulation applications such as~\cite{hirani03}.  Thankfully, although the resulting flows can be different numerically, their broad properties remain the same and even can be simplified from the continuous case.  For instance, discrete Green's functions are constructed using matrix inversion, sidestepping the need for specialized convolution constructions and arguments about convergence~\cite{chung00}.

Partway between the fully continuous and fully discrete approaches are those methods allowing for semi-discrete treatment of PDEs with continuity in time but discrete values along graphs~\cite{chung07}.  These techniques take advantage of the fact that per-vertex functions on graphs admit some amount of ``discrete differential'' structure relating values at neighboring vertices, while allowing continuity in variables that are not naturally discretized by the graph itself such as time $t$.  These methods provide a potential compromise in which we can understand flows with continuous time, although computationally timestepping and eigenfunction computations will make related analytical techniques approximate.

Regardless of the domain and corresponding operator, many common themes arise when considering flows on graphs; most of these themes are direct adaptations of concepts from classical continuous theory.  Most importantly, all three of~\cite{friedman04,chung00,chung07} make use of operator eigenfunctions to construct closed-form solutions to model PDEs.  In the case of~\cite{friedman04}, these eigenfunctions exist in theory but are countably infinite, whereas in~\cite{chung00,chung07} they are finite.  Regardless, the theory of elliptic or positive definite operators guarantees that they fully characterize the behavior of the associated PDEs, and the straightforward form of the heat and wave equations simplifies their usage considerably.  Eigenvector computation is a well-studied technique in numerical analysis, so such closed-form solutions can be used directly in applications studying graphs with discrete operators.

A less technical common theme is that many of the qualitative properties of model PDEs are carried over from one domain to the other with little to no adaptation.  Many theorems characterizing solutions to differential equations can be expressed with little more than the list of properties of Laplacians in \S\ref{laplacianproperties}.  This observation confirms the intuition that the heat and wave equations have predictable properties on graphs even if they are not directly realizable in Euclidean space.

We have focused here on theoretical treatments of PDEs on graphs.  Work in this domain is not necessarily focused on providing algorithmic tools but can be applied directly and indirectly to formulate methods for analyzing networks.  For instance, having verified that the discrete graph Laplacian encodes aspects of structure in a similar way to smooth Laplacians indicates that geometric methods like~\cite{sun09} can be used for graph matching.  More generally, the propagation of heat and waves from vertex to vertex can help characterize graph topology at multiple scales as $t\rightarrow\infty$ through the use of straightforward numerical routines rather than specialized discrete algorithms.  Note that drawbacks similar to~\cite{gordon92} still apply, in that heat and wave kernel methods have no hope of distinguishing isospectral graphs.

Considerable work remains to be done toward understanding the advantages and drawbacks of network-based models of flows and differential equations.  Even so, the promising initial work of~\cite{friedman04,chung00,chung07} indicates that achieving deep results about these models may be as simple as adapting pre-existing proofs and techniques from the continuous domain.

\bibliographystyle{eg-alpha-doi}

\bibliography{qual}

\begin{thebibliography}{\uppercase{RWSN09}}

\bibitem[AF02]{aldous02}
\textsc{Aldous D., Fill J.}:
\newblock Reversible {M}arkov chains and random walks on graphs, 2002.

\bibitem[Bro11]{bronstein11}
\textsc{Bronstein A.~M.}:
\newblock Spectral descriptors for deformable shapes.
\newblock \emph{CoRR abs/1110.5015} (2011).

\bibitem[CB05]{chung05}
\textsc{Chung S.-Y., Berenstein C.~A.}:
\newblock $\omega$-{H}armonic functions and inverse conductivity problems on
  networks.
\newblock \emph{SIAM Journal of Applied Mathematics 65}, 4 (2005), 1200--1226.

\bibitem[CCK07]{chung07}
\textsc{Chung S.-Y., Chung Y.-S., Kim J.-H.}:
\newblock Diffusion and elastic equations on networks.
\newblock \emph{Publ. RIMS 43}, 3 (2007), 699--725.

\bibitem[Chu97]{chung97}
\textsc{Chung F.}:
\newblock \emph{Spectral Graph Theory}.
\newblock No.~no. 92 in Regional Conference Series in Mathematics. Conference
  Board of the Mathematical Sciences, 1997.

\bibitem[CY00]{chung00}
\textsc{Chung F., Yau S.-T.}:
\newblock Discrete {G}reen's functions.
\newblock \emph{Journ. Combinatorial Theory, Series A 91}, 1--2 (2000),
  191--214.

\bibitem[Eva10]{evans10}
\textsc{Evans L.~C.}:
\newblock \emph{Partial Differential Equations}, 2~ed.
\newblock No.~19 in Graduate Studies in Mathematics. American Mathematical
  Society, 2010.

\bibitem[FT04a]{friedman04a}
\textsc{Friedman J., Tillich J.-P.}:
\newblock Calculus on graphs.
\newblock \emph{ar{X}iv:cs/0408028} (2004).

\bibitem[FT04b]{friedman04}
\textsc{Friedman J., Tillich J.-P.}:
\newblock Wave equations for graphs and the edge-based {L}aplacian.
\newblock \emph{Pacific Journ. Math. 216}, 2 (2004), 229--266.

\bibitem[GS06]{gnutzmann06}
\textsc{Gnutzmann S., Smilansky U.}:
\newblock Quantum graphs: Applications to quantum chaos and universal spectral
  statistics.
\newblock \emph{Advances in Physics 55}, 5--6 (2006), 527--625.

\bibitem[GW92]{gordon92}
\textsc{Gordon C., Webb D.}:
\newblock You can't hear the shape of a drum.
\newblock \emph{American Scientist 84} (1992), 46--55.

\bibitem[Hir03]{hirani03}
\textsc{Hirani A.~N.}:
\newblock \emph{Discrete exterior calculus}.
\newblock PhD thesis, 2003.

\bibitem[HSD12]{hirsch12}
\textsc{Hirsch M.~W., Smale S., Devaney R.~L.}:
\newblock \emph{Differential Equations, Dynamical Systems, and an Introduction
  to Chaos}.
\newblock Academic Press, 2012.

\bibitem[Lov93]{lovasz93}
\textsc{Lov\'asz L.}:
\newblock Random walks on graphs: A survey.
\newblock \emph{Combinatorics: Paul Erd\"os is Eighty 2}, 1 (1993), 1–46.

\bibitem[MT97]{madsen97}
\textsc{Madsen I., Tornehave J.}:
\newblock \emph{From Calculus to Cohomology: De Rham Cohomology and
  Characteristic Classes}.
\newblock Cambridge Univ. Press, 1997.

\bibitem[Ros97]{rosenberg97}
\textsc{Rosenberg S.}:
\newblock \emph{The Laplacian on a Riemannian Manifold: An Introduction to
  Analysis on Manifolds}.
\newblock London Math. Society Student Texts. Cambridge Univ. Press, 1997.

\bibitem[RWSN09]{reuter09}
\textsc{Reuter M., Wolter F.-E., Shenton M., Niethammer M.}:
\newblock {L}aplace-{B}eltrami eigenvalues and topological features of
  eigenfunctions for statistical shape analysis.
\newblock \emph{Comput. Aided Des. 41}, 10 (Oct. 2009), 739--755.

\bibitem[SOG09]{sun09}
\textsc{Sun J., Ovsjanikov M., Guibas L.~J.}:
\newblock A concise and provably informative multi-scale signature based on
  heat diffusion.
\newblock \emph{Computer Graphics Forum 28}, 5 (2009), 1383--1392.

\bibitem[Str08]{strauss08}
\textsc{Strauss W.}:
\newblock \emph{Partial Differential Equations: An Introduction}.
\newblock Wiley, 2008.

\end{thebibliography}

\end{document}